\newcommand{\later}[1]{}
\newcommand{\old}[1]{}
\newtheorem{theorem}{Theorem}
\newtheorem{lemma}[theorem]{Lemma}
\newtheorem{observation}[theorem]{Observation}
\newtheorem{proposition}[theorem]{Proposition}
\newtheorem{conjecture}[theorem]{Conjecture}
\newcommand{\etal}{{et~al.}}
\newcommand{\ie}{{i.e.}}
\newcommand{\area}{{\rm area}}
\newcommand{\dist}{{\rm dist}}
\newcommand{\diam}{{\rm diam}}
\newcommand{\NN}{\mathbb{N}} %  set of natural numbers
\newcommand{\RR}{\mathbb{R}} %  set of real numbers
\newcommand{\eps}{\varepsilon}
\def\D{\mathcal D}
\def\R{\mathcal R}
\newcommand{\e}{\mathrm{e}}
\title{\sc Two trees are better than one \footnote{Research partially supported
    by the Hungarian Science Foundation (NKFIH) grant K-131529,
    the ERC grant no. 882971, "GeoScape," and by the Erd\H os Center.}}
\author{%
Adrian Dumitrescu\footnote{Algoresearch L.L.C., Milwaukee, WI 53217, USA,
  E-mail: \texttt{ad.dumitrescu@algoresearch.org}}
\and
J\'anos Pach\footnote{Alfr\'ed R\'enyi Institute of Mathematics,
Budapest, Hungary\@. Email: \texttt{pach@renyi.hu}}
\and
G\'eza  T\'oth\footnote{Alfr\'ed R\'enyi Institute of Mathematics,
  Budapest, Hungary\@. Email: \texttt{geza@renyi.hu}
}}
\begin{document}

\maketitle

\begin{abstract}
  We consider partitions of a point set into two parts,
  and the lengths of the minimum spanning trees of the original set and of the two parts.
  If $w(P)$ denotes the length of a minimum spanning tree of $P$,
  we show that every set $P$ of $n \geq 12$ points admits a bipartition
  $P= R \cup B$ for which the ratio $\frac{w(R)+w(B)}{w(P)}$ is strictly
  larger than $1$; and that $1$ is the largest number with this property.
  Furthermore, we provide a very fast algorithm that computes such a bipartition in
  $O(1)$ time and one that computes the corresponding ratio in $O(n \log{n})$ time.

In certain settings, a ratio larger than $1$ can be expected and sometimes guaranteed.
For example, if $P$ is a set of $n$ random points uniformly distributed in $[0,1]^2$
($n \to \infty$), then for any $\eps>0$, the above ratio in a maximizing partition
is at least $\sqrt2 -\eps$ with probability tending to $1$.
As another example, if $P$ is a set of $n$ points with spread at most $\alpha \sqrt{n}$, for
some constant $\alpha>0$, then the aforementioned ratio in a maximizing partition is
$1 + \Omega(\alpha^{-2})$.

All our results and techniques are extendable to higher dimensions.
\end{abstract}

\section{Introduction} \label{sec:intro}

A \emph{Euclidean minimum spanning tree} (EMST), for a set of $n$ points in the Euclidean plane
or Euclidean space is a spanning tree of the $n$ points that has the minimum total length
(the length of a tree edge is the Euclidean distance between its endpoints).
Write $w(P)$  for the length of an EMST of $P$.
Motivated by a question of Cultrera~\etal~\cite{Ed22, Ed23} arising from their work
on chromatic alpha shapes and persistence, we study the following problem.
For a finite set $P$ of $n$ points in the plane, and a bipartition $P=R \cup B$ into red and blue points,
consider the ratio
\begin{equation} \label{eq:ratio}
  \frac{w(R)+w(B)}{w(P)}.
\end{equation}
Call the maximum, over all non-trivial bipartitions of $P$ into two sets,
the \emph{max MST-ratio} of $P$, denoted $\gamma(P)$.
For $n=3,4,5$ we exhibit examples for which the ratios are at most $1$
and we have no larger examples with this property.
As such, we are tempted to conjecture that perhaps $6$ points suffice to guarantee
a bipartition whose MST-sum strictly exceeds the MST length of the set itself,
see Conjecture~\ref{conj:>1} below.
We come close to answering this question, see our Theorem~\ref{thm:general} below.

\begin{conjecture}  \label{conj:>1}
For any set $P$ of $n \geq 6$ points in the plane we have $\gamma(P) >1$.
\end{conjecture}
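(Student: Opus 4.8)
The plan is to close the gap between Theorem~\ref{thm:general}, which already settles every $n\ge 12$, and the conjectured threshold $n\ge 6$ by disposing of the finitely many remaining values $n\in\{6,7,8,9,10,11\}$ via compactness, and then reducing each value to a finite structural check. First I would normalize: since $\gamma$ is invariant under similarities, I restrict to the space $\mathcal C_n\subset(\RR^2)^n$ of configurations of diameter exactly $1$ (modulo isometry), which is closed and bounded, hence compact, and crucially includes all degenerate limits (coinciding points, collinear clusters). On $\mathcal C_n$ the quantity $w(P)$ is continuous and strictly positive (the diameter is $1$), and each of $w(R),w(B)$ is continuous in the coordinates; since $\gamma(P)$ is the maximum of $\frac{w(R)+w(B)}{w(P)}$ over the finitely many bipartitions, $\gamma$ is continuous and attains a minimum $\gamma_n^{\ast}$. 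It then suffices to prove $\gamma_n^{\ast}>1$ for each of the six values of $n$.

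The engine of the argument is the elementary lower bound
\[
  w(R)+w(B)\ \ge\ \tfrac12\sum_{p\in P}\mathrm{nn}_{c(p)}(p),
\]
where $c(p)\in\{R,B\}$ is the color of $p$ and $\mathrm{nn}_{c(p)}(p)$ is the distance from $p$ to its nearest \emph{same-colored} neighbor; it holds because every vertex of a tree on at least two vertices has an incident edge no shorter than its nearest same-class distance, and each edge is charged at most twice. Since $\mathrm{nn}_{c(p)}(p)\ge \mathrm{nn}_P(p)$ always, with strict inequality whenever the nearest-neighbor bond at $p$ is bichromatic, the task reduces to exhibiting, for every $P$, a bipartition that breaks enough nearest-neighbor bonds while forcing large new same-color distances. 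Concretely I would $2$-color a longest root-to-leaf path of the EMST alternately and extend the coloring greedily so as to cut as many EMST edges as possible; combined with the $60^{\circ}$ angle property of EMST edges (which lower-bounds each newly created same-color distance), this pushes the right-hand side above $w(P)$ once the local structure is rich enough. The regular hexagon, where the alternating coloring already gives $\tfrac12\sum_p\mathrm{nn}_{c(p)}(p)=3\sqrt3>5=w(P)$, shows the mechanism firing exactly at $n=6$.

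To turn ``rich enough'' into a theorem for the six small values, I would analyze the minimizers of $\gamma$ on $\mathcal C_n$ variationally. At a minimizer with distinct points, a unique EMST, and a unique optimal bipartition, $\gamma$ is differentiable, and first-order optimality (viewing each $w(\cdot)$ as a force network and balancing the unit vectors along $p$'s incident red and blue edges) forces strong rigidity: minimizers are pinned to low-dimensional families with equal edge lengths, coincidences, or collinear clusters. On each such family I would verify $\gamma>1$ by hand, exploiting the clustering phenomenon that splitting a multiple point across the two colors doubles a bridge and \emph{strictly} raises the ratio; the borderline configurations for $n=3,4,5$, where $\gamma\le 1$ is attainable, serve as a check that the reduction cannot prove too much.

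The hard part will be the minimizer analysis for the intermediate values $n=7,\dots,11$. Unlike the clean hexagonal picture, the optimal bipartition can switch combinatorially across $\mathcal C_n$, so $\gamma$ is only piecewise smooth and the variational reduction must be run chart by chart, with special care to rule out minimizers sitting on the walls where two bipartitions tie and where the nearest-neighbor bound is slack. Controlling these non-smooth loci, and certifying that no near-degenerate configuration drags $\gamma_n^{\ast}$ down to $1$, is precisely the obstacle separating the proven bound $n\ge 12$ from Conjecture~\ref{conj:>1}; a fully satisfactory resolution would ideally bypass the continuum entirely by strengthening the inequality above into a single averaging estimate that already guarantees $w(R)+w(B)>w(P)$ at $n=6$.
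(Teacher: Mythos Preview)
The statement you are attempting is Conjecture~\ref{conj:>1}, which the paper explicitly leaves open; Theorem~\ref{thm:general} only reaches the threshold $n\ge 12$ via the disjoint-disks property, so there is no proof in the paper to compare against.

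Your compactness strategy has a fatal gap. You compactify by allowing coinciding points and then claim it suffices to show that the minimum $\gamma_n^{\ast}$ over $\mathcal C_n$ exceeds $1$. But $\gamma_n^{\ast}=1$ for every $n\ge 4$. The paper's own upper-bound construction in Proposition~\ref{thm:lb-gamma}---one point at $(1,0)$ and the remaining $n-1$ points in $(0,\eps/n)$---gives $\gamma(P)\le 1+\eps$ for every $\eps>0$; as $\eps\to 0$ these configurations converge in $\mathcal C_n$ to the degenerate labeled tuple with $n-1$ coincident points at the origin and one at $(1,0)$, where a direct check yields $\gamma=1$ exactly (the best bipartition places $(1,0)$ together with one origin copy, giving $w(R)+w(B)=1=w(P)$). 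Hence $\gamma_n^{\ast}>1$ is simply false, and the reduction collapses. What the conjecture actually asserts is that the infimum $1$ over configurations of \emph{distinct} points is never attained, and that is exactly the kind of strict, non-uniform inequality compactness cannot deliver. Your nearest-neighbor bound and the variational analysis inherit the same defect: the minimizers of $\gamma$ on $\mathcal C_n$ lie on the degenerate boundary, so first-order conditions at interior critical points say nothing about them. You flag ``near-degenerate configurations'' as the hard part, but they are not an obstacle to be overcome---they force $\gamma_n^{\ast}=1$ outright. A proof for $6\le n\le 11$ would have to proceed, like the disjoint-disks argument for $n\ge 12$, by exhibiting for every configuration of distinct points an explicit bipartition with strictly positive surplus.
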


We first show using graph theoretic arguments that if $|P|=n \geq 3$,
the maximum ratio $\gamma(P)$ is at least $\frac{n-2}{n-1}$.
Note that this lower bound tends to $1$ as $n \to \infty$.
On the other hand, for any $n \geq 4$ and $\eps>0$, there exists a set of $n$ points
in the plane with $\gamma(P) \leq 1+\eps$. Therefore, the constant $1$~cannot be replaced by larger constant
in Conjecture~\ref{conj:>1}.
For $n=3$, there exists $P$ such that $\gamma(P)=1/2$, and
for $n=4$, there exists $P$ such that $\gamma(P) = \frac{\sqrt3 +1}{3} = 0.910\ldots$; see Fig. \ref{4points}.
\medskip

\begin{figure}[ht]
\begin{center}
\scalebox{0.45}{\includegraphics{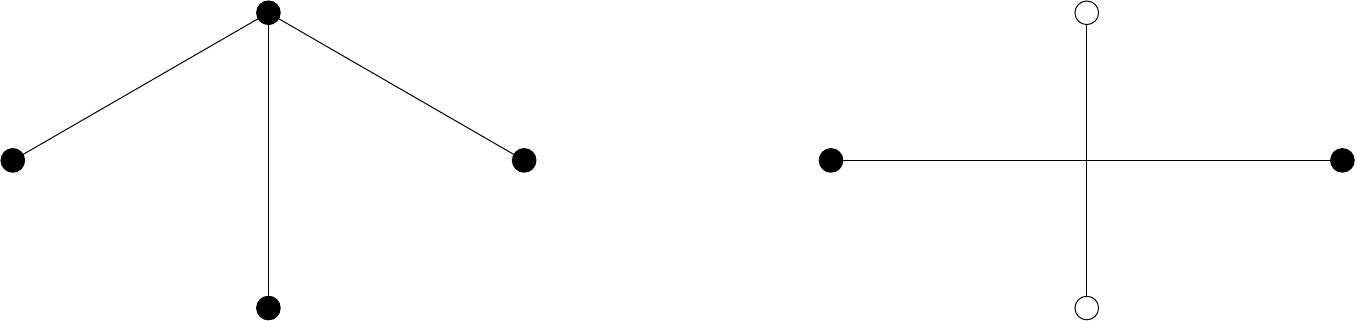}}
\caption{A point set $P$ with $\gamma(P) = \frac{\sqrt3 +1}{3} = 0.910\ldots$.}
\label{4points}
\end{center}
\end{figure}

%\vspace{-0.5cm}

\begin{proposition} \label{thm:lb-gamma}
  Let $P$ be any set of $n \geq 3$ points in the plane.

  Then $\gamma(P) \geq \frac{n-2}{n-1}$.
  On the other hand, for any $n \geq 3$ and $\eps>0$, there exists $P$ with $|P|=n$,
  such that $\gamma(P) \leq 1+\eps$.
\end{proposition}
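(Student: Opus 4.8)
The statement has two independent parts; I would establish the lower bound first and then give an explicit configuration for the upper bound.

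\textbf{Lower bound.} Fix an EMST $T$ of $P$ with edge set $\{e_1,\dots,e_{n-1}\}$, and let $e$ be a shortest edge of $T$, so that $\len(e)\le w(P)/(n-1)$ by averaging. Deleting $e$ disconnects $T$ into two subtrees spanning vertex sets $R$ and $B$, both nonempty; this is a non-trivial bipartition $P=R\cup B$. The point I would push on is the claim that the induced subtree $T[R]$ is actually an EMST of $R$, and likewise $T[B]$ is an EMST of $B$. This is a one-line exchange argument: if $R$ admitted a spanning tree $S$ with $\len(S)<\len(T[R])$, then $S\cup\{e\}\cup T[B]$ would be a spanning tree of $P$ — it is connected because $e$ bridges $R$ and $B$, and it has exactly $n-1$ edges — of total length strictly less than $\len(T)=w(P)$, a contradiction; symmetrically for $B$. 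Hence $w(R)+w(B)=\len(T[R])+\len(T[B])=\len(T)-\len(e)=w(P)-\len(e)\ge \frac{n-2}{n-1}\,w(P)$, and therefore $\gamma(P)\ge\frac{n-2}{n-1}$.

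\textbf{Upper bound.} For the second part I would exhibit $n$ collinear points $p_1,\dots,p_n$ with $p_1=0$ and consecutive gaps $p_{i+1}-p_i=C^{\,i}$, where $C=C(\eps)$ is a large constant. For collinear point sets the EMST is simply the path through the points in sorted order, so $w(Q)=\max Q-\min Q$ for every subset $Q$; in particular $w(P)=p_n=\sum_{i=1}^{n-1}C^i$. Now take any non-trivial bipartition $P=R\cup B$; relabel so that $p_1\in R$, and let $p_k$ (resp.\ $p_b$) be the highest-index point of $R$ (resp.\ $B$). Exactly one of $k,b$ equals $n$, so $p_k+p_b\le p_n+p_{n-1}$, and since $p_1=0$ and the smallest point of $B$ is nonnegative,
\[
w(R)+w(B)=(p_k-p_1)+\bigl(p_b-\min B\bigr)\le p_k+p_b\le p_n+p_{n-1}
=\sum_{i=1}^{n-1}C^i+\sum_{i=1}^{n-2}C^i .
\]
Dividing by $w(P)=\sum_{i=1}^{n-1}C^i$ gives $\dfrac{w(R)+w(B)}{w(P)}\le 1+\dfrac{\sum_{i=1}^{n-2}C^i}{\sum_{i=1}^{n-1}C^i}<1+\dfrac{1}{C-1}\le 1+\eps$ once $C\ge 1+1/\eps$. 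Maximizing over the finitely many bipartitions yields $\gamma(P)\le 1+\eps$. (If one insists on points in general position, a sufficiently small perturbation changes the ratio by less than $\eps$, so the same configuration works with $\eps$ replaced by $2\eps$.)

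\textbf{Remarks on the difficulty.} Neither half is technically deep. The only point that must be handled correctly is the exchange argument certifying that $T[R]$ and $T[B]$ are themselves EMSTs: without it one merely gets the useless bound $w(R)+w(B)\le\len(T)-\len(e)$ in the wrong direction. For the construction the only care needed is the bookkeeping in the case analysis over bipartitions — tracking which of the two extreme points $p_1,p_n$ lands in which class — and the role of the super-geometric spacing, which makes the ``wasted'' span $p_{n-1}$ negligible against $w(P)\asymp C^{\,n-1}$. I would also sanity-check $n=3,4$ by hand (e.g.\ the equilateral triangle shows $\frac{n-2}{n-1}$ is tight for $n=3$) to confirm the bounds are not vacuous in the small cases.
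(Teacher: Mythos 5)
Your proof is correct. The lower bound is exactly the paper's argument: delete the shortest edge of an EMST and use the exchange argument to certify that the two resulting subtrees are EMSTs of their parts, which is indeed the one step that must not be skipped. For the upper bound you use a slightly different construction than the paper: the paper places $p_1=(0,0)$, $p_n=(1,0)$, and the remaining $n-2$ points in a tiny cluster of diameter $\eps/n$ near the origin, so that whichever tree misses $p_n$ has length $<\eps$; you instead spread collinear points with super-geometrically growing gaps $C^i$ and bound the two spans by $p_n+p_{n-1}$. Both are collinear configurations whose total length is dominated by a single long gap, and your case analysis (tracking which part contains $p_n$, and the bound $\min B\ge 0$) is sound; the paper's version is marginally simpler since only the last gap needs to dominate, but the two arguments are equivalent in substance.
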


The example showing the second statement was also found by Cultrera~\etal~\cite{Ed23}.
Using geometric arguments, we can improve the first statement and settle Conjecture~\ref{conj:>1}, for $n\ge 12$.

\begin{theorem}  \label{thm:general}
  For any set $P$ of $n \geq 12$ points in the plane, we have $\gamma(P) >1$.

  A suitable bipartition with a ratio $>1$ can be computed in $O(1)$ time, whereas
  the corresponding ratio can be computed in $O(n \log{n})$ time.
\end{theorem}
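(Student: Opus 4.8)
The overall strategy is to find, in any set $P$ of $n\ge 12$ points, a small "local" configuration that can be split off to gain length, while controlling the loss incurred in the remaining tree. The natural quantity to exploit is the ratio between the longest and shortest edges, or more precisely the structure of the EMST. I would start from a fixed EMST $T$ of $P$. Recall the standard facts: $T$ has maximum degree $\le 6$ in the plane (in fact $\le 5$ for $n$ large enough to avoid degeneracies), and along any edge $uv\in T$, the two "lunes" (intersection of the two disks of radius $|uv|$ centered at $u$ and $v$) contain no other point of $P$. These gap properties are what let one argue that removing a point and reconnecting costs little.

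\smallskip

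\textbf{Step 1: Reduce to a bounded-size gadget.} Root the analysis at a leaf $\ell$ of $T$ incident to the \emph{longest} edge $e=\ell u$ of $T$, or more generally identify a vertex $v$ whose incident edges in $T$ are all comparatively long. The key claim I would aim for: there is a constant-size subset $S\subseteq P$ (say $|S|\le 11$) such that the bipartition $R=S$, $B=P\setminus S$ already satisfies $w(R)+w(B)>w(P)$. To see why a constant suffices: $w(B)\ge w(T)-(\text{edges of }T\text{ incident to }S)$, and $w(R)=w(S)$ can be made to recover most of the discarded edges when $S$ is chosen to be a cluster of mutually close points, or alternatively $w(R)$ is tiny but the reconnection cost in $B$ is even tinier. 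The arithmetic that $n\ge 12$, i.e. $|B|\ge 1$, is exactly what guarantees the split is non-trivial after removing an $11$-point gadget; this is presumably where the precise threshold $12$ comes from.

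\smallskip

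\textbf{Step 2: Case analysis on the local geometry.} I expect the proof to branch on whether $P$ contains a pair of very close points relative to the diameter (a "dense" case) or is "spread out." In the dense case, take $R$ to be a tight cluster: $w(R)$ is negligible, and $B=P\setminus R$ still needs a spanning tree, but since the cluster was tight, $w(B)$ differs from $w(P)$ by a negligible amount, so the ratio is essentially $1+o(1)$ — wait, that only gives $\ge 1$, not $>1$. So the real work is the opposite: pick $R$ to be a \emph{carefully chosen constant-size} set — e.g. an endpoint of the longest EMST edge together with a few neighbors — so that the EMST of $P$ must use a long edge to attach $R$ to the rest, whereas $w(R)+w(P\setminus R)$ gets to avoid that long edge. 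Quantitatively, one shows $w(P)\ge w(R)+w(P\setminus R)+(\text{long edge})-(\text{cheap reconnection})$, and the long edge strictly dominates by the gap property. The main obstacle will be handling the case where $P$ has \emph{no} long edge to exploit — i.e. all EMST edges are within a bounded factor of each other — because then every split is balanced and the strict inequality is delicate; here I would fall back on an averaging/counting argument over \emph{all} $2^{n-1}-1$ bipartitions, or over a structured family of them (e.g. all single-vertex splits), showing the sum of the ratios exceeds the count, hence some split beats $1$.

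\smallskip

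\textbf{Step 3: Algorithmic claims.} Once Step 1 establishes that a \emph{constant-size} gadget $S$ works, the $O(1)$-time algorithm is immediate: by the structure theorem the witnessing $S$ depends only on local combinatorial type, so one outputs a fixed-description bipartition (e.g. "split off the configuration guaranteed by the case analysis"); since $|S|=O(1)$ the description and verification take $O(1)$ time assuming the relevant $O(1)$ points are given, or one simply outputs "$R=\{p\}$ for a suitable single point $p$" if the analysis permits a singleton in most cases. For computing the max MST-ratio $\gamma(P)$ itself in $O(n\log n)$: this cannot mean optimizing over all $2^{n-1}$ bipartitions, so it must be that the \emph{optimal} bipartition is also forced to be structured — presumably $R$ is always one side of a single EMST edge, or a bounded-size set hanging off one EMST edge — so that after computing the EMST in $O(n\log n)$ time one enumerates the $O(n)$ candidate edges/gadgets and evaluates each in amortized $O(\log n)$ time using incremental EMST-update data structures. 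I would flag that the correctness of the $O(n\log n)$ bound relies on a separate structural lemma (optimal bipartitions are induced by EMST edges, up to $O(1)$ perturbations) whose proof is the genuinely technical part, and is the step I expect to be the main obstacle.
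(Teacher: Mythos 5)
Your proposal is a plan rather than a proof, and it misses the one idea the whole argument turns on. The paper's proof needs no case analysis on long versus short edges and no averaging over bipartitions. It uses the \emph{disjoint disks} (sphere-of-influence) property: for each $p\in P$ let $r(p)$ be the distance from $p$ to its nearest neighbor in $P$ and $\omega(p)$ the closed disk of radius $r(p)$ centered at $p$. A theorem of K\'ezdy and Kubicki says that for $n\ge 12$ two of these disks must be disjoint --- this, and not the ability to carve off an $11$-point gadget, is where the threshold $12$ comes from. If $\omega(p)$ and $\omega(q)$ are disjoint, then $|pq|>r(p)+r(q)$, and the bipartition $B=\{p,q\}$, $R=P\setminus\{p,q\}$ already works: $w(B)=|pq|$, while $w(R)\ge w(P)-r(p)-r(q)$, because an MST of $R$ together with two edges of lengths $r(p)$ and $r(q)$ reattaching $p$ and $q$ to their nearest neighbors is a spanning tree of $P$. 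Hence $w(R)+w(B)\ge w(P)-r(p)-r(q)+|pq|>w(P)$. Your Step 2 never arrives at a complete argument: the tight-cluster branch, as you yourself note, only yields a ratio $\ge 1$; a singleton $R=\{p\}$ cannot work in general, since then $w(R)=0$ and $w(P\setminus\{p\})>w(P)$ need not hold; and the fallback of averaging over all bipartitions when the EMST edges have comparable lengths is asserted rather than proved, with no reason offered why the average ratio should exceed $1$ (for the equilateral triangle it is well below $1$, so any such averaging claim would at minimum need the hypothesis $n\ge 12$ in an essential way).

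You also misread the algorithmic claim. The $O(n\log n)$ bound refers to computing the ratio \emph{of the particular bipartition produced}, i.e., three EMST computations; it is not a claim about computing $\gamma(P)$, which the paper explicitly lists as an open problem. Consequently no structural lemma about optimal bipartitions being induced by EMST edges is needed. The $O(1)$ bound for producing the bipartition follows from monotonicity of the disjoint disks property: find the disjoint pair among any fixed $12$ of the points; adding the remaining points only shrinks the disks, so the same pair still witnesses the property in all of $P$.
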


We also prove sharper lower bounds on $\gamma(P)$ in two cases:
(i) for random points uniformly distributed in
a square (or any other fat container region),  and (ii) for dense point sets.
In both cases we require the number of points to be sufficiently large.
Note that neither class of point-sets includes the other.
The proofs are simpler than that of Theorem~\ref{thm:general}.

We start with random points uniformly distributed.
According to a classic result of Beardwood, Halton, and Hammersley~\cite{BHH59},
the length of a minimum spanning tree of a random sample $\{ X_1,\ldots,X_n\}$ of $n$ points
uniformly distributed in the unit cube $[0,1]^k$ satisfies
\begin{equation} \label{eq:mst-random}
  {L(X_1,\ldots,X_n) / n^{1-1/k}} \to \beta(k)
\end{equation}
with probability one, where $\beta(k)>0$ is a constant depending on $k$.
Later, Bertsimas and Van Ryzin~\cite{BR90} proved that
${\beta(k) / \sqrt{k}} \to  {1 / \sqrt{2 \pi \e}}$, see also~\cite{St97}; however,
this more precise determination will not be needed here. Using only~\eqref{eq:mst-random}
we prove:

\begin{theorem} \label{thm:random}
Let $P$ be a set of $n$ random points uniformly distributed in $[0,1]^2$.

Then for any $\eps>0$, we have $\gamma(P) \geq \sqrt2 -\eps$ with probability tending to $1$
as $n \to \infty$.
\end{theorem}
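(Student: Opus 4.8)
The plan is to exploit the scaling law~\eqref{eq:mst-random} in both directions: for the whole set $P$ and for each of the two parts of a suitably chosen bipartition. The natural candidate partition is a \emph{geometric} one: split $[0,1]^2$ into two congruent halves, say $[0,\tfrac12]\times[0,1]$ and $[\tfrac12,1]\times[0,1]$, and let $R$ be the points of $P$ in the left half and $B$ the points in the right half. Each half is a rectangle of area $\tfrac12$; after rescaling by a factor $\sqrt2$ it becomes a unit square, and the $n/2$ (in expectation) points falling in it are uniformly distributed there. Hence by~\eqref{eq:mst-random} applied inside each half, $w(R)$ and $w(B)$ are each asymptotically $\tfrac{1}{\sqrt2}\,\beta(2)\,(n/2)^{1/2} = \tfrac{1}{2}\,\beta(2)\sqrt{n}$, so $w(R)+w(B) \sim \beta(2)\sqrt{n}$. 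Meanwhile $w(P)\sim\beta(2)\sqrt{n}$ for the whole unit square. The ratio of these two leading terms is exactly $1$, which is not good enough; the gain of $\sqrt2$ has to come from using \emph{more} than two strips.

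The fix is to cut $[0,1]^2$ into $k$ vertical strips of width $1/k$ for a large constant $k$, two-color the strips alternately red/blue, and let $R$, $B$ be the points in the red and blue strips respectively. Now $R$ itself is (asymptotically) a union of $\lceil k/2\rceil$ independent uniform samples, one per red strip, each strip being a $\tfrac1k\times 1$ rectangle containing $\sim n/k$ points. A minimum spanning tree of $R$ is at most the union of the MSTs of the individual strips plus $O(k)$ connecting edges of length $O(1)$; since each strip's MST has length $\sim \beta(2)\sqrt{(1/k)\cdot 1}\cdot\sqrt{n/k}\cdot$(shape correction), one gets $w(R) \le (1+o(1))\sum_{\text{red strips}} \beta(2)\sqrt{\operatorname{area}\cdot(n/k)}$. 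Carrying out this bookkeeping: strip area is $1/k$, points per strip $\sim n/k$, so each strip contributes $\sim \beta(2)\sqrt{(1/k)(n/k)} = \beta(2)\sqrt{n}/k$, and summing over $\sim k/2$ red strips gives $w(R) \lesssim \tfrac12\beta(2)\sqrt n$, and likewise for $B$, again yielding ratio $\to 1$. So uniform-width strips give only $1$, not $\sqrt2$ — I need a better configuration.

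The right idea is to make $R$ and $B$ each \emph{locally} look like a denser sample than $P$, which is what actually drives the $\sqrt2$. Partition $[0,1]^2$ into $N=m^2$ little squares of side $1/m$ for a large constant $m$; inside each little square, split its $\sim n/N$ points into two halves by a further vertical cut, assigning one half to $R$ and one half to $B$. Then locally, within a little square of area $1/N$, the red points are a uniform sample of size $\sim n/(2N)$ in a region of area $1/(2N)$ — the same \emph{density} $n$ as the original, so no local gain. That is still wrong. The correct construction, I believe, is the opposite: keep \emph{all} points of a little square in $R$, or \emph{all} in $B$, but arrange that the red squares are spread out so that $w(R)$ must traverse large gaps — no, that only makes $w(R)$ bigger, which helps. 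Let me reconsider: we want $w(R)+w(B)$ large relative to $w(P)$, so we \emph{want} $R$ and $B$ to be "inefficient." Color the $m^2$ little squares in a checkerboard pattern. Then $R$ consists of $\sim m^2/2$ clusters, each a uniform sample of $\sim n/m^2$ points in an area-$1/m^2$ square, and these clusters are pairwise at distance $\ge 1/m$. The MST of $R$ has length $\ge \sum_{\text{red squares}} w(\text{cluster}) + (\#\text{red squares}-1)\cdot\tfrac1m \sim \tfrac{m^2}{2}\cdot\beta(2)\sqrt{1/m^2}\sqrt{n/m^2} + \tfrac{m^2}{2}\cdot\tfrac1m = \tfrac12\beta(2)\sqrt n + \tfrac m2$. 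For fixed $m$ the first term dominates as $n\to\infty$ and we are back to ratio $1$.

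So the true mechanism must be subtler, and I now expect the intended argument uses a \emph{randomized} refinement or a hierarchical/recursive partition whose depth grows with $n$; the factor $\sqrt2$ is exactly $\beta(2)\sqrt{2}/\beta(2)$, i.e., it comes from the fact that \emph{halving the point density multiplies MST length by $\sqrt2$ per point but there are half as many "point-equivalents,"} — more precisely, if one could partition so that $R$ has $n/2$ points that are uniform in the \emph{whole} unit square (density $n/2$), then $w(R)\sim\beta(2)\sqrt{n/2}$, and $w(R)+w(B)\sim 2\beta(2)\sqrt{n/2} = \sqrt2\,\beta(2)\sqrt n = \sqrt2\, w(P)$. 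This \emph{is} achievable asymptotically: randomly 2-color each point independently with probability $\tfrac12$; then $R$ is a binomial$(n,\tfrac12)$ sample, which conditioned on its size is uniform in $[0,1]^2$, size $\sim n/2$ whp, so $w(R)\sim\beta(2)\sqrt{n/2}$ and similarly $w(B)$, giving ratio $\to\sqrt2$. The plan, then: \textbf{(1)} take the bipartition $P=R\cup B$ by independent fair coin flips; \textbf{(2)} note $|R|,|B| = n/2 + O(\sqrt n)$ whp, and conditioned on cardinalities each is uniform; \textbf{(3)} apply~\eqref{eq:mst-random} to $R$ (with $|R|$ points) and to $B$ and to $P$, obtaining $w(R)+w(B) = (1+o(1))\beta(2)(\sqrt{|R|}+\sqrt{|B|}) = (1+o(1))\sqrt2\,\beta(2)\sqrt n = (1+o(1))\sqrt2\, w(P)$ whp; \textbf{(4)} since $\gamma(P)$ is the \emph{maximum} over bipartitions, it is at least this randomized value, so $\gamma(P)\ge\sqrt2 - \eps$ whp. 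The main obstacle is purely technical: the BHH law~\eqref{eq:mst-random} is an a.s.\ statement about a sequence of sample sizes tending to infinity with density $n$ in a \emph{fixed} cube, whereas here $|R|$ is random; I would handle this by a standard concentration argument (Azuma/McDiarmid on the coin flips, using that changing one color flips $w(R)+w(B)$ by $O(1/\sqrt n)$-order via the boundedness of the added/removed MST edge), plus a monotonicity sandwiching of $w(R)$ between the EMST lengths for $\lfloor n/2 - n^{2/3}\rfloor$ and $\lceil n/2 + n^{2/3}\rceil$ uniform points, for which~\eqref{eq:mst-random} applies directly; carefully, the $o(1)$ and "whp" quantifiers must be threaded so the final probability tends to $1$ for each fixed $\eps$.
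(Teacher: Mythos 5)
Your final plan is essentially the paper's proof: the paper simply takes $R$ to be the first $n/2$ points and $B$ the last $n/2$ points of the i.i.d.\ sequence, so each part is a uniform sample of exactly $n/2$ points in $[0,1]^2$ and \eqref{eq:mst-random} gives $w(R)+w(B)\sim 2\beta\sqrt{n/2}=\sqrt2\,w(P)$. Splitting by index rather than by independent coin flips sidesteps the random-cardinality technicality you worry about at the end, but the underlying idea --- each part is itself a uniform sample of about half the points in the \emph{full} unit square --- is identical.
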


We continue with dense point sets.
For a set $P$ of $n$ points in the plane, define
\[
\Delta(P)= \frac{\max\{\dist(a,b) : a,b \in P, a \neq b\}}{\min\{\dist(a,b) : a,b \in P, a \neq b\}}, \]
where $\dist(a,b)$ is the Euclidean distance between points $a$ and $b$.
The ratio $\Delta(P)$ is referred to as the \emph{aspect ratio} or the \emph{spread} of $P$;
see for instance~\cite{BLMN05}.
We assume without loss of generality that the minimum pairwise distance is $1$ and in this case
$\Delta(P)$ is the diameter $\diam(P)$ of $P$.
A~standard volume argument shows that if $P$ has $n$ points, then $\Delta(P) \geq c \, n^{1/2}$,
where $c>0$ is an absolute constant. If $n$ is large enough,
then $c \geq 2^{1/2} 3^{1/4} \pi^{-1/2} \approx 1.05$; see~\cite[Prop.~4.10]{Va92}.
An $n$-element  point set satisfying the condition $\Delta(P) \leq \alpha \, n^{1/2}$,
for some constant $\alpha>0$, is called here $\alpha$-\emph{dense}; see \cite{Va94}
or~\cite[Ch.~10]{BMP05}.

\begin{theorem} \label{thm:dense}
  Let $P$ be a dense set of $n$ points in the plane, with $\Delta(P) \leq \alpha n^{1/2}$, for some
  $\alpha \geq 2^{1/2} 3^{1/4} \pi^{-1/2}$.

  Then $\gamma(P) \geq 1 + \Omega(\alpha^{-2})$. More precisely, we have
  \[ \gamma(P) \geq 1 + \frac{1}{11 (2\alpha +1)^2},\;\;\; \text{ for large } n.\]
\end{theorem}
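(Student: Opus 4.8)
The plan is to play an upper bound on $w(P)$ against a bipartition whose two minimum spanning trees are forced to be longer by a definite fraction of $w(P)$. After rescaling, assume the minimum interpoint distance is $1$, so that $\diam(P)=\Delta(P)\le\alpha\sqrt n$. Tile the bounding square of $P$ by a grid of cells of side $\tfrac12\sqrt n$; at most $(2\alpha+1)^2$ cells meet $P$, and each contains $O(n)$ points, so bounding a travelling-salesman tour inside each cell and concatenating these tours gives $w(P)\le\big(c\,\alpha+o(1)\big)n$ for an absolute constant $c$. The cell count $(2\alpha+1)^2$ and the cell diameter $\tfrac{\sqrt2}{2}\sqrt n$ are what eventually produce the shape $\tfrac{1}{11(2\alpha+1)^2}$ in the bound.

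The structural input is that the cut property of minimum spanning trees descends to connected subtrees: if $T$ is an EMST of $P$ and $S\subseteq P$ has $T[S]$ connected, then $T[S]$ is an EMST of $S$. Hence, choosing any set $F$ of edges of $T$ and $2$-colouring $P$ so that $F$ is exactly the set of bichromatic edges — possible, since contracting the components of $T-F$ yields a tree, which is bipartite — the monochromatic red components $A_1,\dots,A_p$ and blue components $C_1,\dots,C_q$ satisfy $\sum_i w(A_i)+\sum_j w(C_j)=w(P)-w(F)$, and therefore
\[
w(R)+w(B)\;=\;w(P)-w(F)+\rho_R+\rho_B,\qquad \rho_R:=w(R)-\sum_i w(A_i),
\]
with $\rho_B$ defined analogously; the terms $\rho_R,\rho_B\ge 0$ measure the extra length needed to merge the monochromatic components into one tree. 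Thus $\gamma(P)\ge 1+\tfrac{\rho_R+\rho_B-w(F)}{w(P)}$, and everything reduces to producing an $F$ for which $\rho_R+\rho_B-w(F)=\Omega(n/\alpha)$.

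For that I would superimpose a second, fine grid of cells of side $s=\Theta(\alpha)$ and use the ``robust'' colouring in which a point is red precisely when both of its fine-cell coordinates are even, keeping in $F$ only those bichromatic edges of $T$ that are short. Then red points in distinct red fine-cells lie at distance $\ge s$, while each fine-cell carries only $O(\alpha^2)$ points, so the number of red monochromatic components is $p=\Omega(|R|/\alpha^2)$; since every spanning tree of $R$ must link these $p$ clusters with edges of length $\ge s$, one gets $\rho_R\ge (p-1)(s-1)$, and likewise for blue (the pattern is invariant, up to swapping colours, under a shift by $s$ in either coordinate). Choosing a translate of the fine grid for which both colour classes have $\Omega(n)$ points makes $p=\Omega(n/\alpha^2)$, hence $\rho_R+\rho_B=\Omega(n/\alpha)$; with $w(F)$ kept well below this and divided through by the bound on $w(P)$, this gives $\gamma(P)\ge 1+\Omega(\alpha^{-2})$. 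Tracking the constants — the $(2\alpha+1)^2$ coarse cells, the $O(\alpha^2)$ points per fine cell, the slack in the reconnection bound — yields the explicit constant $\tfrac{1}{11(2\alpha+1)^2}$.

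The step I expect to be the real obstacle is the lower bound on $\rho_R$ and $\rho_B$: a priori two monochromatic components can lie at distance exactly $1$, so their separation is genuinely \emph{created} by the fine grid, and one must also argue that an optimal spanning tree of $R$ really spends at least $p-1$ of its edges on long inter-cluster links rather than short-cutting through the interiors of other clusters. Doing this robustly, while simultaneously keeping $w(F)$ small and both colour classes of size $\Omega(n)$, is where the proof has its weight; the initial tour bound and the closing arithmetic are routine. The algorithmic claims then follow because $w(P),w(R),w(B)$ — and hence the ratio — are all computable in $O(n\log n)$ time.
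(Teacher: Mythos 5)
Your overall strategy --- a balanced two-colouring driven by a fine grid of side $\Theta(\alpha)$, combined with the decomposition identity $w(R)+w(B)=w(P)-w(F)+\rho_R+\rho_B$ --- is genuinely different from the paper's, and the step you yourself flag as ``the real obstacle'' is in fact a gap that the proposal does not close. First, there is an internal inconsistency: the identity requires $F$ to be \emph{exactly} the set of bichromatic edges of $T$ (otherwise the components of $T-F$ are not monochromatic and the telescoping $\sum_i w(A_i)+\sum_j w(C_j)=w(P)-w(F)$ fails), yet with the grid-determined colouring you have no freedom to ``keep in $F$ only the short bichromatic edges.'' Taking $F$ to be all bichromatic edges, you never bound $w(F)$, and it need not be small: if many red cells contain a single point whose nearest neighbour is blue at distance $\Theta(\alpha)$, then $w(F)=\Theta(n/\alpha)$ --- the same order as the gain $\rho_R+\rho_B$ you are trying to extract --- and whether $\rho_R+\rho_B-w(F)$ is positive then hinges on constants you have not controlled (e.g.\ a red singleton whose nearest red point is at distance $s$ but whose tree edge to a blue point has length $1.5s$ contributes negatively). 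Second, the bound $\rho_R\ge(p-1)(s-1)$ conflates the components of $T-F$ with the fine-cell clusters: two red components of $T-F$ can lie in the \emph{same} red cell at distance $1$, and even for components in distinct cells one needs the forest-exchange argument (trading intra-cluster MST edges of length up to $\sqrt2\,s$ against cross-cluster edges of length $\ge s$) that you allude to but do not carry out. So the central inequality $\rho_R+\rho_B-w(F)=\Omega(n/\alpha)$ remains unproved, and it is exactly the load-bearing step.

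For contrast, the paper avoids all of this by making $B$ \emph{sparse} rather than balanced: it selects $k=\Omega(n/(2\alpha+1)^2)$ ``rich'' cells of side $2\alpha$ that are pairwise $\ge 8\alpha$ apart (a $1$-in-$25$ subsampling of the grid), puts into $B$ one point $p_i$ from each such cell while its partner $q_i$ (at distance $\le 2\sqrt2\,\alpha$) stays in $R$. Then $w(R)\ge w-\sum_i|p_iq_i|\ge w-2\sqrt2\,\alpha k$ by reattaching each $p_i$ to $q_i$ (Lemma~\ref{lem:mst-delete-k}), and $w(B)\ge 8\alpha(k-1)$ since the $p_i$ are pairwise far apart; the net gain $(8-2\sqrt2)\alpha k$ divided by $w\le 1.4\alpha n$ gives the stated constant with no need to account for bichromatic tree edges at all. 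If you want to salvage your route, you would need to either bound $w(F)$ by a quantity strictly smaller than your reconnection gain, or abandon the identity and lower-bound $w(R)$ and $w(B)$ directly against $w(P)$ --- at which point you are essentially led back to a deletion argument of the paper's type.
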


\paragraph{Related work.}
Graph decomposition is a classical topic in graph theory~\cite{Chung81,EH67,Lov66}.
In this note, we are interested in finding partitions of a points set into two parts such
that the total length of their MST's is as large as possible.

Estimating the length of a shortest tree or tour of $n$ points in the unit square with respect
to  Euclidean distances has been studied as early as 1940s and 1950s by Fejes T\'oth~\cite{Fej40},
Few~\cite{Few55}, and Verblunsky~\cite{Ve51}, respectively.
Clustering algorithms based on minimum and maximum spanning trees have been studied
by Asano~\etal~\cite{ABKY88}.

In the same spirit, recall that the well-know \emph{Euclidean Steiner ratio} is the infimum
over all finite point sets $P$ in the plane of the ratio between the lengths of a shortest Steiner tree
and that of minimum spanning tree of the set~\cite{GP68,CG85}.
Denoting this ratio by $\rho$, Gilbert and Pollak conjectured in 1968 that $\rho \geq \sqrt3/2=0.866\ldots$.
Early results due to Graham and Hwang~\cite{GH76} and Du and Hwang~\cite{DH83}
showed that $\rho \geq 1/\sqrt3 \approx  0.577$, and $\rho \geq 0.8$.
The current best lower bound on this ratio,  $\rho \geq \rho_0$, is due to Chung and Graham~\cite{CG85};
here $\rho_0 = 0.824168\ldots$ is the unique real root in the interval $(0.8,1)$ of a polynomial
of $12$th degree\footnote{$P(x) = x^{12} -4 x^{11} -2 x^{10} +40 x^9 - 31 x^8 - 72 x^7 + 116 x^6
+16 x^5 -151 x^4 + 80 x^3 + 56 x^2 -64 x +16$.}.

For every $n \geq 7$, there exists a set $P$ of $n$ points such that
$\gamma(P)$ can get arbitrarily close to  $\frac{2\sqrt3 +3}{3} \approx 2.154$. %; see Fig \ref{7points}.
Using the above lower bound on the Steiner ratio, Cultrera et al.~\cite{Ed23} proved that
if $P$ is any set of $n \geq 3$ points in the plane,
then $\gamma(P) \leq 2/\rho_0 \approx 2.426$.

\section{Preliminaries} \label{sec:prelim}

We subsequently assume that $n \geq 3$, since otherwise the maximum MST-ratio is $0$.
We first observe the following upper bounds for $n=3,4,5$.

When $n=3$, the three vertices of an equilateral triangle $P$ give $\gamma(P) =1/2$.
Indeed, by symmetry the three possible bipartitions yield the same lengths, $1$,
whereas the MST length is $2$.
Proposition~\ref{thm:lb-gamma} shows that this bound is the best possible.

When $n=4$,  let $P$ be the four vertices of a rhombus of unit side-length and an angle of $60^\circ$.
The bipartition that gives the maximum ratio is one with $|R|=|B|=2$, where $R$ is a diametral pair,
thus  $\gamma(P)=  \frac{\sqrt3 +1}{3} \approx 0.910$.
It can be checked that if $P$ consists of the four vertices of a square, then the maximum ratio is
slightly larger, $\gamma(P)=  \frac{2 \sqrt2}{3} \approx 0.942$.
Proposition~\ref{thm:lb-gamma} yields that $\gamma(P) \geq 2/3$, for $|P|=4$.

When $n=5$, let $P$ consist of the following five points (written as complex numbers):
$ 0, 1, \omega, \omega^2, \omega^3$, where $\omega = \cos \frac{\pi}{3} + i \sin  \frac{\pi}{3}$
is the $6$th root of unity. It is easy to check that the bipartition that gives the maximum ratio
is one with $R=\{1,\omega^3\}=\{-1,1\}$, and so $\gamma(P)=\frac{2+2}{4}=1$.
Proposition~\ref{thm:lb-gamma} yields that $\gamma(P) \geq 3/4$, for any point set $P$ with $|P|=5$.

\paragraph{Disjoint disks.}
Let $P$ be a finite set of points in $\RR^d$. Let $\D(P) =\{\omega(p) \colon p \in P\}$,
where $\omega(p)$ is the largest closed ball centered at $p$ and containing no other points
from $P$ in its interior. Given $p \in P$, let $r(p)$ denote the radius of $\omega(p)$.

A set of points $P$ in $\RR^d$ is said to satisfy the \emph{disjoint balls} property
if $\D(P)$ contains two disjoint balls. For the plane, we use the term \emph{disjoint disks} property.
A key (easy) observation about this property is its monotonicity.

\begin{observation} \label{obs:monotone}
  Let $P$ be a set of points in the plane that satisfies the disjoint disks property.
  If $P \subset Q$, then $Q$ also satisfies this property.
\end{observation}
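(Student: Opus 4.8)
The plan is to observe that the defining radii $r(p)$ are monotone \emph{decreasing} when points are added to the set. To make this precise, for a finite point set $S$ and $p\in S$ write $r_S(p)$ for the distance from $p$ to its nearest neighbor in $S$, and $\omega_S(p)$ for the closed ball of radius $r_S(p)$ centered at $p$; this is exactly the ball $\omega(p)$ in the definition, with the ambient set made explicit in the notation. The first step is the elementary inequality $r_Q(p)\le r_P(p)$ for every $p\in P$: since $P\subseteq Q$, the minimum in the definition of $r_Q(p)$ is taken over a superset of the points defining $r_P(p)$, so it can only decrease. Consequently $\omega_Q(p)\subseteq \omega_P(p)$ for every $p\in P$, both balls being centered at the same point $p$ with the former having no larger radius.

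The second step applies this to the two disjoint disks guaranteed by the hypothesis. Suppose $\omega_P(p)$ and $\omega_P(q)$ are disjoint for some $p,q\in P$ with $p\ne q$. Because $P\subseteq Q$, both $p$ and $q$ belong to $Q$, so $\omega_Q(p)$ and $\omega_Q(q)$ are members of $\D(Q)$. By the containments $\omega_Q(p)\subseteq \omega_P(p)$ and $\omega_Q(q)\subseteq \omega_P(q)$ from the first step, and since a subset of one disjoint set and a subset of the other are again disjoint, $\omega_Q(p)\cap\omega_Q(q)=\emptyset$. Hence $\D(Q)$ contains two disjoint disks, which is the disjoint disks property for $Q$.

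There is essentially no serious obstacle here; the only point requiring the slightest care is bookkeeping the dependence of $\omega(p)$ and $r(p)$ on the ambient set, which the notation $\omega_S(\cdot)$, $r_S(\cdot)$ handles. The argument is dimension-free, so the same proof shows the analogous monotonicity of the disjoint balls property in $\RR^d$.
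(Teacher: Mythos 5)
Your proof is correct and follows the same argument as the paper: adding points can only shrink the nearest-neighbor balls, so the two disjoint disks centered at $p$ and $q$ remain disjoint in $Q$. Your version just spells out the monotonicity $r_Q(p)\le r_P(p)$ with explicit notation, which the paper states in one line.
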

\begin{proof}
  Assume that $\omega(p)$ and $\omega(p')$ are disjoint, where $p,p' \in P$.
  Adding the points in $Q \setminus P$ maintains this property, since
  the disks centered at $p$ and $p'$ can only get smaller.
\end{proof}

\begin{lemma} \label{lem:disjoint-disks}
  Let $P$ be a set of points in the plane that satisfies the disjoint disks property.

  Then there exist two points $p,q \in P$ such that $|pq| > r(p) + r(q)$.
\end{lemma}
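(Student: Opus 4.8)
The plan is to exploit the disjoint disks property directly by looking at the two disjoint disks it guarantees. Suppose $\omega(p)$ and $\omega(p')$ are disjoint, with radii $r(p)$ and $r(p')$. Disjointness of the closed disks means $|pp'| \geq r(p) + r(p')$, so we are already tantalizingly close; the only issue is that this inequality might be an equality (the disks could be externally tangent), whereas the lemma demands a strict inequality, possibly for a \emph{different} pair of points. So the first step is to record this weak inequality, and the remaining work is to upgrade $\geq$ to $>$ by choosing the pair cleverly.

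The key observation is that $r(p)$ is, by definition, the distance from $p$ to its \emph{nearest} neighbor in $P$. So let $q$ be a nearest neighbor of $p$, meaning $|pq| = r(p)$. Now consider the pair $(p, q)$ and the quantity $r(p) + r(q)$. Since $r(q) \le |pq| = r(p)$ is false in general — rather, $r(q) \leq |qp| = r(p)$ does hold because $p$ is \emph{some} point at distance $r(p)$ from $q$, so $q$'s nearest neighbor is at distance $\le r(p)$. Hmm, that gives $r(p) + r(q) \le 2r(p) = 2|pq|$, which is the wrong direction. So the nearest-neighbor pair is exactly the bad case, not the good one. Instead I would argue by contradiction: suppose that for \emph{every} pair $a,b \in P$ we have $|ab| \le r(a) + r(b)$. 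I want to show this forces all the disks in $\D(P)$ to pairwise overlap (have intersecting interiors or at least not be disjoint), contradicting the disjoint disks hypothesis. Concretely, take the two disjoint disks $\omega(p)$, $\omega(p')$; the contradiction hypothesis gives $|pp'| \le r(p) + r(p')$, and combined with disjointness ($|pp'| \ge r(p)+r(p')$) we get $|pp'| = r(p) + r(p')$ exactly — the disks are externally tangent, touching at a single point $m$ on segment $pp'$.

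Now I would derive a contradiction from tangency. The point $m$ lies on the boundary of $\omega(p)$, so $m$ is (weakly) farther from $p$ than any point of $P \setminus \{p\}$ that lies inside $\omega(p)$'s interior — more usefully, the open ball of radius $r(p)$ about $p$ contains no point of $P$ other than possibly... no, it contains \emph{no} other point of $P$ by definition of $r(p)$ (largest ball with no other points in its interior; nearest neighbor at distance exactly $r(p)$ sits on the boundary). So the nearest neighbor $q$ of $p$ satisfies $|pq| = r(p)$, hence $q$ lies on $\partial\omega(p)$, hence $|qp'| \geq |pp'| - |pq|$ is not immediately a contradiction. The main obstacle, which I'd need to push through carefully, is extracting a genuinely strict inequality: the cleanest route is to observe that equality $|pp'| = r(p) + r(p')$ means the \emph{open} disks are disjoint and tangent at $m$; if I now take $q$ = nearest neighbor of $p$ (on $\partial\omega(p)$) and $q'$ = nearest neighbor of $p'$ (on $\partial\omega(p')$), then $q \ne p'$ and $q' \ne p$ forces $q,q'$ off the line $pp'$ or else collinearity gives an immediate strict gap; in the collinear degenerate cases one checks a strict inequality holds for the pair $(q,p')$ or $(q,q')$ directly, and in the generic case the tangency plus $q$ on the circle boundary but not at $m$ yields, via the law of cosines, $|q p'| > r(q) + r(p')$. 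I expect this final case analysis — handling the tangent/collinear configurations to squeeze out strictness — to be the only real work; everything else is the one-line disjointness inequality.
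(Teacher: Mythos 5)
Your opening observation is the paper's entire proof, but you then talk yourself out of it based on a misreading of the definition. The disks $\omega(p)$ in $\D(P)$ are \emph{closed} disks, and the disjoint disks property asserts that two of them have empty intersection. Two closed disks that are externally tangent are \emph{not} disjoint: they share the tangency point. Hence disjointness of $\omega(p)$ and $\omega(p')$ already gives the strict inequality $|pp'| > r(p) + r(p')$, and the pair of centers satisfies the lemma with no further work. The entire tangency scenario you set up to ``upgrade $\geq$ to $>$'' cannot occur, so the detour through nearest neighbors, collinear degeneracies, and the law of cosines is unnecessary.

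That said, as written your proposal does not constitute a complete proof even under your own (weaker) reading of disjointness: the decisive step --- showing that in the tangent configuration some other pair $(q,p')$ or $(q,q')$ satisfies the strict inequality --- is only asserted (``I expect this final case analysis\ldots to be the only real work''), and the claim $|qp'| > r(q) + r(p')$ does not follow from the law of cosines alone without also controlling $r(q)$, which you never do. So either you invoke the correct definition and delete everything after your first two sentences, or you owe the reader the missing case analysis. The former is what the paper does.
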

\begin{proof}
  Let $p,q \in P$ be the centers of two disjoint disks in $\D(P)$.
  Then $p,q$ clearly satisfy the requirement.
\end{proof}

According to a result of K{\'e}zdy and Kubicki~\cite{KK97},
any set of $n \geq 12$ points in the plane satisfies the
\emph{disjoint disks} property; we record this fact in the following lemma.
On the other hand, this property cannot be guaranteed for any set of $8$ points;
and it is conjectured that it holds for any set of $n \geq 9$ points~\cite{FL94,HJLM93}.

\begin{lemma} \label{lem:12}  {\rm (K{\'e}zdy and Kubicki~\cite{KK97})}
Let $P$ be a set of $n$ points in the plane with $n \geq 12$.

Then $P$ satisfies the disjoint disks property.
\end{lemma}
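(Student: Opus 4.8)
The plan is to prove the contrapositive: if $P$ fails the disjoint disks property, then $|P|\le 11$. Failure means that \emph{every} two of the disks $\omega(p)$, $p\in P$, intersect; equivalently, writing $r(p)=\dist(p,P\setminus\{p\})$ for the radius of $\omega(p)$, we have $|pq|\le r(p)+r(q)$ for all $p\ne q$. I would carry two elementary facts alongside this assumption. First, since the interior of $\omega(p)$ contains no point of $P$, we have $|pq|\ge r(p)$ and $|pq|\ge r(q)$, hence $|pq|\ge\max\{r(p),r(q)\}$ for every pair. Second, combining this with the failure assumption gives $|pq|\ge\tfrac12\bigl(r(p)+r(q)\bigr)$, so the \emph{shrunken} disks of radius $r(p)/2$ centered at the points of $P$ are pairwise non-overlapping — a packing to be fed into an area estimate later.

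Next I would localize the point set. Pick $z\in P$ with $r(z)=R:=\max_{p}r(p)$ and rescale so that $R=1$. For every $p\ne z$, emptiness of $\omega(z)$ gives $|zp|\ge 1$, while the failure assumption gives $|zp|\le r(z)+r(p)\le 2$; hence $P\setminus\{z\}$ lies in the annulus $\{\,1\le|x-z|\le 2\,\}$, and all shrunken disks lie in the ball of radius $\tfrac52$ about $z$. The combinatorial core is a ``$60^\circ$ lemma'': for any point $x$ of the plane, at most six points $p\in P$ can have $x\in\omega(p)$. Indeed, if $x\in\omega(p)\cap\omega(q)$, then in the triangle $xpq$ we have $|pq|\ge\max\{r(p),r(q)\}\ge\max\{|xp|,|xq|\}$, so $pq$ is a longest side and the angle at $x$ is at least $60^\circ$; no more than six rays from $x$ can be pairwise at angle $\ge 60^\circ$. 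If $x$ is itself a point of $P$, the same argument caps the count at seven, the extra disk being $\omega(x)$.

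From here a soft argument already yields a finite bound: a finite family of pairwise-intersecting disks in the plane has a piercing set of bounded size $k$ (one may take $k\le 4$ by the Danzer--Stach\'o transversal theorem), so pigeonhole together with the $60^\circ$ lemma gives $n\le 7k\le 28$. To lower this to $11$ one must replace the black box by a direct analysis of the annulus, splitting $P$ according to the size of $r(p)$. The ``large'' points, with $r(p)$ bounded below by a constant, have shrunken disks of bounded-below area packed inside the radius-$\tfrac52$ ball, which caps their number. The ``small'' points satisfy $|zp|\le 1+r(p)$ and therefore crowd against the inner circle; moreover, for such a $p$ the chain $r(q)\le|pq|\le r(p)+r(q)$ forces every other $q$ to essentially realize its nearest-neighbor distance at $p$, so the $60^\circ$ lemma \emph{centered at $p$} caps the total number of points. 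Balancing the two estimates (and disposing by hand of the finitely many near-extremal configurations) is what produces $n\le 11$.

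The hard part — and the reason this is a theorem rather than an exercise — is precisely this last step: the clean packing-plus-piercing route loses a large multiplicative constant, and recovering the sharp threshold requires a careful, case-laden optimization of the annulus geometry, which is exactly the content of K\'ezdy and Kubicki~\cite{KK97}. For the purposes of the present paper only the \emph{existence} of such a threshold matters, and the value $12$ suffices for our use; it is worth recalling that there is an $8$-point configuration all of whose disks pairwise meet, so whether $9$, $10$, or $11$ points already force two disjoint disks remains open~\cite{FL94,HJLM93}.
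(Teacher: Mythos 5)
The first thing to note is that the paper does not prove this lemma at all: it is imported verbatim from K\'ezdy and Kubicki~\cite{KK97}, and the intended justification is the citation. So there is no internal proof to compare against; the only question is whether your sketch actually establishes the statement, and it does not quite.

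What you prove rigorously is a weaker threshold. Your reductions are sound: the failure assumption $|pq|\le r(p)+r(q)$ for all pairs, the emptiness bound $|pq|\ge\max\{r(p),r(q)\}$, the annulus localization around the point of maximal radius, and the $60^\circ$ lemma are all correct. (One small misattribution: the non-overlap of the half-radius disks already follows from $|pq|\ge\max\{r(p),r(q)\}\ge\tfrac12\bigl(r(p)+r(q)\bigr)$, i.e., from emptiness alone, not from the failure assumption.) Combining the $60^\circ$ lemma with the Danzer--Stach\'o piercing theorem for pairwise intersecting disks legitimately gives $n\le 28$ in the contrapositive, i.e., the disjoint disks property for all $n\ge 29$. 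But the lemma asserts the threshold $12$, and your route from $28$ down to $11$ is only a plan (``balancing the two estimates,'' ``disposing by hand of the finitely many near-extremal configurations''); you acknowledge yourself that carrying this out is precisely the content of~\cite{KK97}. As a proof of the statement as written, that is a genuine gap, and the constant is not cosmetic: Theorem~\ref{thm:general} inherits its threshold $12$ directly from this lemma. If one is content to cite~\cite{KK97} --- as the paper is --- then your additional material is a pleasant self-contained argument for the qualitative fact that \emph{some} finite threshold exists, which would still yield Theorem~\ref{thm:general} with $12$ replaced by $29$, but it does not replace the citation for the stated bound.
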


\paragraph{Proof of Proposition~\ref{thm:lb-gamma}.}

\emph{Upper bound.} Take $p_1=(0,0)$, $p_n=(1,0)$, and the rest of points in the interval $(0,\eps/n)$.
  It is easily verified that $w=1$, and $w_1 + w_2 \leq 1 + \eps$: indeed, the point $p_n$
  appears only in one tree, whereas the length of the second tree is less than $\eps$, which yields
  \[ \gamma(P) =\frac{w_1 + w_2}{w} \leq 1+\eps, \]
  as required.

  \medskip
  \emph{Lower bound.} Let $T_0$ be an MST of $P$, with $w=|T_0|$ and
  \[ \gamma(P) = \frac{w_1 + w_2}{w}, \]
  where $w_1 + w_2$, where $w_1$ and $w_2$ are the lengths of the MSTs, say, $T_1$ and $T_2$,
  for the red and the blue points in a bipartition of \emph{maximum} length.
 We claim that $w_1 + w_2 \geq \frac{n-2}{n-1} \, w$.
 Indeed, consider the bipartition $P= R \cup B$ obtained by deleting the shortest edge, say, $e$, of $T_0$.
 Note that $T_0 \setminus \{e\}$ consists of two spanning trees $T_R$ and $T_B$, of $R$ and $B$, respectively.
 Moreover, $T_R$ and $T_B$ are MSTs of $R$ and $B$, respectively.
 Indeed, assume for contradiction that
 there is spanning tree $T'_R$ of $R$ with a smaller length than $T_R$, \ie, $|T'_R| < |T_R|$.
 Then $T'_R \cup \{e\} \cup T_B$ is a spanning tree of $P$ of length
 \[ |T'_R|  + |e|  + |T_B| < |T_R|  + |e|  + |T_B| = w, \]
 a contradiction. Similarly, there is no spanning tree $T'_B$ of $B$ with a smaller length than $T_B$.

 Since $T$ has $n-1$ edges, we have $|e| \leq w/(n-1)$.
 The length of the resulting bipartition in terms of spanning trees is at least
 \[ w - \frac{w}{n-1} = \frac{n-2}{n-1} \, w. \]
 Since $w_1 + w_2$ is the length of a maximum bipartition (in the same terms), we have
 \[ w_1 + w_2 \geq \frac{n-2}{n-1} \, w, \]
as claimed.  Consequently,
   \[ \gamma(P) =  \frac{w_1 + w_2}{w} \geq \frac{n-2}{n-1}, \]
   as required.
\qed

\subsection{Key lemmas}

The following lemmas can be formulated for general graphs; however, here we restrict ourselves
to the simpler geometric setting.

\begin{lemma} \label{lem:mst-delete-2}
  Let $T$ be a minimum spanning tree of a complete geometric graph $G=K_n(P)$, and $p,q \in P$
  be the centers of two disjoint disks in $\D(P)$.

  Then $w(P \setminus \{p,q\}) \geq w -r(p) - r(q)$, where $w =w(P)$.
\end{lemma}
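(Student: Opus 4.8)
The plan is to show that deleting the two points $p$ and $q$ from $T$ and reconnecting the resulting pieces costs at most $r(p)+r(q)$. Let $T$ be the given MST of $P$. Removing $p$ (a vertex of $T$) and its incident edges breaks $T$ into $\deg_T(p)$ subtrees; similarly for $q$. After removing both $p$ and $q$ we obtain a forest $F$ on $P\setminus\{p,q\}$ with some number $k$ of components, and I claim $k \le \deg_T(p)+\deg_T(q)-1$ in general, but more usefully that we can reconnect $F$ into a spanning tree of $P\setminus\{p,q\}$ by adding $k-1$ edges whose total length is at most $r(p)+r(q)$. The key geometric input is that every edge of $T$ incident to $p$ has length at most $2r(p)$ is \emph{too weak}; instead I use that among the neighbors of $p$ in $T$, consecutive ones (in angular order around $p$) are within distance comparable to their distance from $p$, but the cleanest route is different, as I explain next.

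First I would recall the standard fact that in an MST the neighbors of a vertex $p$ are "spread out": if $a,b$ are both neighbors of $p$ in $T$, then $|ab| \ge \max(|pa|,|pb|)$ (otherwise swapping an edge shortens $T$), so $p$ has at most $6$ neighbors, and moreover one can order the neighbors $a_1,\dots,a_d$ of $p$ cyclically around $p$ and connect $a_i$ to $a_{i+1}$: by the MST exchange property applied to the edge $pa_i$ (or $pa_{i+1}$), each such "shortcut" edge $a_ia_{i+1}$ has length at most\ldots{} this again needs care. The actual clean statement I want is: when we delete $p$, we can reconnect its $d$ orphaned subtrees into one tree using $d-1$ edges, each of which can be charged so that the total added length is at most (sum of the $d$ edges at $p$) minus (the longest edge at $p$); and since the shortest edge at $p$ has length exactly $r(p)$ — wait, it is the \emph{shortest} edge incident to $p$ in the complete graph that has length $r(p)$, not in $T$. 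So in fact every edge of $T$ at $p$ has length $\ge r(p)$, and the shortest one, call it $e_p$, when deleted, disconnects $T$ into two parts, each of which is already connected; the remaining $d-1$ subtrees around $p$ still must be merged.

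Because that bookkeeping is delicate, the approach I would actually commit to is the "two-step" one: first delete $p$ alone and show $w(P\setminus\{p\}) \le w(P) - (\text{length of some edge}) + (\text{reconnection cost}) \le w(P)$, i.e.\ that removing a point never increases the MST weight by more than the price of patching — and here the honest bound is $w(P\setminus\{p\}) \le w(P) + r(p)$ does \emph{not} hold in that direction either. Let me reorient: the inequality to prove, $w(P\setminus\{p,q\}) \ge w - r(p) - r(q)$, is a \emph{lower} bound on the smaller set's MST, which is the easy direction! Given any MST $T'$ of $P\setminus\{p,q\}$, I can build a spanning tree of $P$ from it by attaching $p$ to its nearest neighbor in $P\setminus\{p,q\}$ and $q$ to its nearest neighbor, adding at most $\le$ (nearest-neighbor distance of $p$) $+ \ldots$ Here I use $p,q$ are centers of disjoint disks: the nearest point of $P$ to $p$ is at distance exactly $2r(p)$? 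No — $r(p)$ is the radius of the largest ball at $p$ with no other point of $P$ in its interior, so the nearest neighbor of $p$ is at distance exactly $r(p)$.

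So the clean proof: take an MST $T'$ of $P' := P\setminus\{p,q\}$ with weight $w(P')$. Let $p'\in P'$ be a nearest neighbor of $p$ in $P$, so $|pp'| = r(p)$ (it is $r(p)$ unless the unique nearest neighbor of $p$ is $q$, but the disks of $p$ and $q$ are disjoint, hence $|pq| > r(p)+r(q) \ge r(p)$, so $q$ is not the nearest neighbor; thus $p'\in P'$ and $|pp'|=r(p)$). Similarly pick $q'\in P'$ with $|qq'| = r(q)$. Then $T' \cup \{pp', qq'\}$ is a spanning tree of $P$ (it is connected since $p,q$ each hang off the connected $T'$, and it has $n-1$ edges), so $w = w(P) \le w(P') + r(p) + r(q)$, giving $w(P') \ge w - r(p) - r(q)$, as required. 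The only subtlety — and the one place the disjoint-disks hypothesis is genuinely used — is exactly the observation that $q$ cannot be \emph{the} nearest neighbor of $p$ and vice versa, so that $p'$ and $q'$ can be chosen in $P'$ with those exact distances; I would spell that out carefully, since if one blindly set $p' = q$ the construction would fail. No significant obstacle remains beyond writing this cleanly.
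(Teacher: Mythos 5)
Your final argument is correct and is essentially the paper's proof: attach $p$ and $q$ to an MST of $P\setminus\{p,q\}$ via nearest-neighbor edges of lengths $r(p)$ and $r(q)$ to obtain a spanning tree of $P$, forcing $w \le w(P\setminus\{p,q\}) + r(p)+r(q)$. Your explicit check that disjointness of the disks prevents $q$ from being the nearest neighbor of $p$ (and vice versa) is a point the paper leaves implicit, but the approach is the same; the lengthy false starts before it could simply be deleted.
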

\begin{proof}
  Assume for contradiction that $w(P \setminus \{p,q\}) < w -r(p)-r(q)$, and let $T'$ be
  a MST of $P \setminus \{p,q\}$. Add $p,q$ to $T'$ by connecting them via edges of length
  $r(p)$ and $r(q)$, respectively. We obtain a spanning tree of $G$ whose length is strictly less than
  \[ w(P \setminus \{p,q\}) + r(p)+r(q) < w -r(p)-r(q)  + r(p)+r(q) =w, \]
 a contradiction.
\end{proof}

We can formulate this statement in a slightly more general form that will be used in the proof
of Theorem~\ref{thm:dense}.

\begin{lemma} \label{lem:mst-delete-k}
  Let $T$ be a minimum spanning tree of a complete geometric graph $G=K_n(P)$,
  where $w =w(P)$ is the length of $T$.
  Let $p_1q_1, p_2q_2, \ldots,p_k q_k$ be a set of $k$ point pairs on $2k$ distinct points in $P$.

  Then $w(P \setminus \{p_1,p_2,\ldots,p_k\}) \geq w - \sum_{i=1}^k |p_i q_i|$.
\end{lemma}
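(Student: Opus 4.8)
The plan is to mimic the proof of Lemma~\ref{lem:mst-delete-2}, but remove the $k$ points one pair at a time rather than all at once. First I would observe that it suffices to prove the following single-removal step: for any point $p\in P$ and any $q\in P$ with $q\ne p$, one has $w(P\setminus\{p\})\ge w(P)-|pq|$. This is essentially the same argument as in Lemma~\ref{lem:mst-delete-2}: take a minimum spanning tree $T'$ of $P\setminus\{p\}$ and reattach $p$ to $q$ via an edge of length $|pq|$; the result is a spanning tree of $P$, so $w(P)\le w(P\setminus\{p\})+|pq|$, which rearranges to the claimed bound. Note this step does not require $q$ to survive in the smaller set, nor does it require any disjointness — we only need $q\in P$ with $q\ne p$.

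Next I would apply this step iteratively along the sequence $p_1,p_2,\ldots,p_k$. Set $P_0=P$ and $P_i=P_{i-1}\setminus\{p_i\}$ for $i=1,\ldots,k$, so that $P_k=P\setminus\{p_1,\ldots,p_k\}$. The key point is that when we delete $p_i$ from $P_{i-1}$, the partner $q_i$ is still present in $P_{i-1}$: indeed $q_i\ne p_j$ for all $j$ since the $2k$ points $p_1,\ldots,p_k,q_1,\ldots,q_k$ are distinct, and in particular $q_i\notin\{p_1,\ldots,p_{i-1}\}$, so $q_i\in P_{i-1}$. Hence the single-removal step applies with the pair $(p_i,q_i)$ and gives
\[
w(P_i)\ge w(P_{i-1})-|p_iq_i|.
\]
Summing (telescoping) these inequalities over $i=1,\ldots,k$ yields
\[
w(P\setminus\{p_1,\ldots,p_k\})=w(P_k)\ge w(P_0)-\sum_{i=1}^k|p_iq_i|=w-\sum_{i=1}^k|p_iq_i|,
\]
as desired.

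There is essentially no obstacle here beyond bookkeeping: the only thing one must be careful about is exactly the point noted above, namely that each partner $q_i$ survives long enough to serve as the reattachment vertex when $p_i$ is removed, which is guaranteed by the hypothesis that all $2k$ points are distinct (we never delete any $q_i$). One could also phrase the whole thing in one shot — take an MST $T'$ of $P\setminus\{p_1,\ldots,p_k\}$ and add back $p_1,\ldots,p_k$ one by one, each time connecting $p_i$ by an edge of length $|p_iq_i|$ to a vertex already in the current tree — but the telescoping formulation makes the distinctness requirement most transparent, so that is the route I would write up.
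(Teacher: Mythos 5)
Your proof is correct and rests on the same idea as the paper's: reattach each deleted $p_i$ to its surviving partner $q_i$ to bound the loss by $\sum_i |p_iq_i|$. The paper does this in one shot (adding all $k$ points back to an MST of $P\setminus\{p_1,\ldots,p_k\}$ and deriving a contradiction), which is exactly the variant you mention at the end; your telescoping formulation is just a different bookkeeping of the same argument.
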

\begin{proof}
Assume for contradiction that
$w(P \setminus \{p_1,p_2,\ldots,p_k\}) < w - \sum_{i=1}^k |p_i q_i|$, and let $T'$ be
  a MST of $P \setminus \{p_1,p_2,\ldots,p_k\}$.
Add $p_1,p_2,\ldots,p_k$ to $T'$ by connecting them via the $k$ edges
$p_1q_1, p_2q_2, \ldots,p_k q_k$, respectively; note that $q_1,q_2,\ldots,q_k$ are
vertices in $T'$.
 We obtain a spanning tree of $G$ whose  length is strictly less than
\[ w - \sum_{i=1}^k |p_i q_i| + \sum_{i=1}^k |p_i q_i| =w, \]
 a contradiction.
\end{proof}

\section{The general case}

In this section we prove Theorem~\ref{thm:general}.
Let $p,q \in P$ be the centers of two disjoint disks whose existence is guaranteed by
Lemma~\ref{lem:12}. Consider the bipartition of $P$:
\[ R = P \setminus \{p,q\}, \ \ B=\{p,q\}. \]
Let $T$ be a minimum spanning tree of $P$, where $w =w(P)$.
By using Lemma~\ref{lem:mst-delete-2}, we have
\[ w(R) \geq w -r(p) - r(q),  \text{ and } w(B) = |pq|. \]
Consequently, be Lemma~\ref{lem:disjoint-disks} we obtain
\[ \gamma(P) \geq  \frac{w(R) + w(B)}{w} \geq \frac{w - r(p) - r(q) + |pq|}{w} > 1, \]
as required. See Fig. \ref{fig:bipartition}

\medskip

\begin{figure}[ht]
\begin{center}
\scalebox{0.45}{\includegraphics{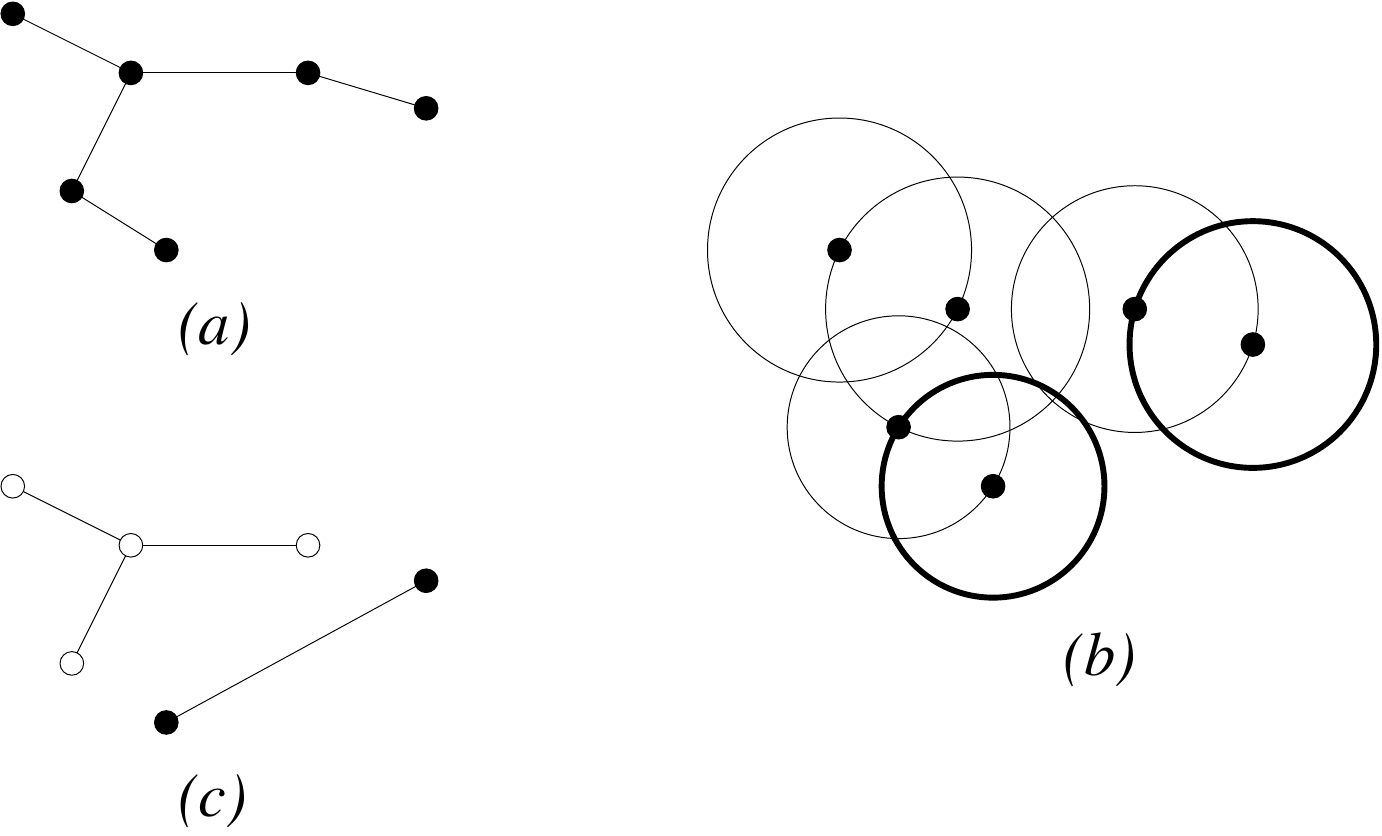}}
\caption{(a) A point set $P$
  with a minimum spanning tree, (b) a pair of  disjoint disks in $\D(P)$ defines a bipartition,
  (c) the minimum spanning trees of the two parts.}
\label{fig:bipartition}
\end{center}
\end{figure}

\vspace{-0.5cm}

\paragraph{Algorithm.}
Given $P$, where $|P|=n$, the algorithm only keeps the first (or any) $12$ points of the input
for consideration. Denoting this subset by $P'$, the centers of two disjoint balls,
say, $p$ and $q$, are found in $P'$ in $O(1)$ time.
The corresponding partition is $R = P \setminus \{p,q\}$, and $B=\{p,q\}$, and the algorithm outputs
the smaller set $B$ in $O(1)$ time. Its correctness follows from Observation~\ref{obs:monotone}.

The three relevant spanning trees and the corresponding ratio can be computed in $O(n \log{n})$ time,
see, eg, \cite{BCK+08}. This completes the proof of Theorem~\ref{thm:general}.
\qed

\medskip
Interestingly enough, any sufficiently large point set determines many pairs of disjoint disks and,
therefore, many bipartitions with a ratio $>1$.

\begin{proposition} \label{thm:many}
  Let $P$ be a set of $n\ge 12$ points in the plane.
  Then %$\D(P)$ contains at least $n(n-1)/132$ pairs of disjoint disks.
there are at least $n(n-1)/132$ pairs of disjoint disks in $\D(P)$.
\end{proposition}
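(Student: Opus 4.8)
The plan is to partition $P$ into groups of a fixed size, apply Lemma~\ref{lem:12} inside each group, and then show that the disjoint-disk pairs obtained from distinct groups are genuinely distinct, so that their count can simply be added up. Concretely, I would split $P$ arbitrarily into $\lfloor n/12 \rfloor$ pairwise disjoint blocks of $12$ points each (discarding at most $11$ leftover points). By Lemma~\ref{lem:12}, each such block $P_i$ contains two points $p_i, q_i$ that are centers of disjoint disks \emph{in $\D(P_i)$}. The first issue to address is that we want disjoint disks in $\D(P)$, not in $\D(P_i)$: but this is exactly Observation~\ref{obs:monotone} together with Lemma~\ref{lem:disjoint-disks} --- adding the remaining points of $P$ only shrinks the disks $\omega(p_i)$ and $\omega(q_i)$, so they remain disjoint in $\D(P)$.

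The next step is to argue that the pairs $\{p_i, q_i\}$ coming from different blocks are distinct as unordered pairs. This is immediate because the blocks are pairwise disjoint, so $\{p_i, q_i\} \cap \{p_j, q_j\} = \emptyset$ for $i \neq j$ (in fact they share no point at all). Hence we obtain at least $\lfloor n/12 \rfloor$ distinct disjoint-disk pairs in $\D(P)$. Since we want a bound of the form $n(n-1)/132$, which is roughly $n^2/132$, a single linear-in-$n$ family of pairs is far from enough: we need a quadratic count. The idea is to average over many different block decompositions: each unordered pair $\{a,b\} \subseteq P$ of distinct points lies together in some $12$-element subset, and for a ``random'' $12$-subset containing $\{a,b\}$ the Kézdy--Kubicki pair could be $\{a,b\}$ itself or could involve other points, so I would instead count incidences between disjoint-disk pairs and the $12$-subsets that witness them.

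The cleanest route to the quadratic bound: for each of the $\binom{n}{12}$ subsets $S$ of size $12$, Lemma~\ref{lem:12} (with monotonicity) gives at least one disjoint-disk pair of $\D(P)$ contained in $S$. Form the bipartite incidence count between $12$-subsets $S$ and disjoint-disk pairs $\{a,b\} \subseteq S$. On one side this count is at least $\binom{n}{12}$. On the other side, a fixed pair $\{a,b\}$ is contained in exactly $\binom{n-2}{10}$ subsets of size $12$. Therefore the number $N$ of disjoint-disk pairs in $\D(P)$ satisfies $N \cdot \binom{n-2}{10} \geq \binom{n}{12}$, i.e.
\[
N \;\geq\; \frac{\binom{n}{12}}{\binom{n-2}{10}} \;=\; \frac{n(n-1)}{12 \cdot 11} \;=\; \frac{n(n-1)}{132}.
\]
The main obstacle, such as it is, is purely bookkeeping: making sure the monotonicity step is invoked correctly (so that a disjoint pair inside $S$ stays disjoint inside the larger set $P$ --- and crucially \emph{not} the reverse, which would be false), and verifying the binomial identity $\binom{n}{12}/\binom{n-2}{10} = n(n-1)/132$ exactly rather than asymptotically. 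No geometry beyond Lemma~\ref{lem:12} and Observation~\ref{obs:monotone} is needed; the combinatorial double-counting does all the remaining work.
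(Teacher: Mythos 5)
Your final double-counting argument is exactly the paper's proof: each of the $\binom{n}{12}$ twelve-element subsets yields (via Lemma~\ref{lem:12} and Observation~\ref{obs:monotone}) a disjoint-disk pair of $\D(P)$ contained in it, and each pair lies in exactly $\binom{n-2}{10}$ such subsets, giving $N \geq \binom{n}{12}/\binom{n-2}{10} = n(n-1)/132$. The preliminary block-decomposition discussion is an unnecessary detour, but you correctly identify it as insufficient and the argument you settle on is correct and identical in substance to the paper's.
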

\begin{proof}
By Lemma~\ref{lem:12}, every $12$ points determine at least one pair of disjoint disks in $\D(P)$.
Since each fixed pair appears in exactly ${n-2 \choose 10}$ $12$-tuples, $\D(P)$ contains at least
\[ {n \choose 12} \Big{/} {n-2 \choose 10} = \frac{n(n-1)}{132} \]
pairs of disjoint disks.
\end{proof}

\smallskip

By a more
careful calculation, the constant $1/132$
in the above statement can be improved.

\section{Special cases} \label{sec:special}

\subsection{Random point sets} \label{subsec:random}

\paragraph{Proof of Theorem~\ref{thm:random}.}
Let $\beta=\beta(2)$ be the constant from Equation~\eqref{eq:mst-random}.
Let $P=a_1,a_2,\ldots,a_n$ be a sequence of $n$ random points uniformly distributed in $[0,1]^2$,
where $n$ is a large even number. Consider the bipartition $P= R \cup B$ consisting of the
first $n/2$ and the last $n/2$ points, \ie, $R=\{a_1,\ldots,a_{n/2}\}$, and $B=\{a_{n/2+1},\ldots,a_n\}$.

By~\eqref{eq:mst-random},
\[ w=w(P) \sim \beta \sqrt{n}, \ \ w(R) \sim \beta \sqrt{n/2},
\text{ and } w(B) \sim \beta \sqrt{n/2}, \]
with probability one. Consequently, as $n \to \infty$,
\[ \gamma(P) \geq \frac{w(R) + w(B)}{w} \geq
\frac{2\beta \sqrt{n/2} - o(\sqrt{n})}{\beta \sqrt{n} +o(\sqrt{n})}
\geq \sqrt2 -\eps, \]
with probability tending to $1$, as claimed.
\qed

\subsection{Dense point sets} \label{subsec:dense}

\paragraph{Proof of Theorem~\ref{thm:dense}.}
Let $P$ be an $n$-element $\alpha$-dense set and $T$ be a minimum spanning tree of $P$ of length $w$.
Since $\diam(P) \leq \alpha \sqrt{n}$, we may assume that $P$ is contained in an axis-aligned square $Q$
of side-length $\alpha \sqrt{n}$.
Subdivide $Q$ into $n/4$ axis-parallel squares, called \emph{cells}, of side-length $2\alpha$;
indeed, $\alpha^2 n/(4 \alpha^2)=n/4$.
We may assume without loss of generality that no point in $P$ lies on a cell boundary.
We need two preliminary lemmas.

\begin{lemma} \label{lem:cell-ub}
Each cell contains at most $\frac{2}{\sqrt3} (2\alpha +1)^2$ points in $P$.
\end{lemma}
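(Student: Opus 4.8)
\textbf{Proof proposal for Lemma~\ref{lem:cell-ub}.}

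The plan is to exploit the fact that $P$ has minimum pairwise distance $1$: around every point of $P$ we can place a disk of radius $1/2$, and these disks have pairwise disjoint interiors. If a cell $C$ contains $k$ points of $P$, then the $k$ half-disks sit inside the Minkowski expansion of $C$ by $1/2$, i.e.\ inside a square of side-length $2\alpha+1$ (with rounded corners, which only helps). Comparing areas would give $k \cdot \pi/4 \leq (2\alpha+1)^2$, hence $k \leq \frac{4}{\pi}(2\alpha+1)^2$. This is already the right order of magnitude, but the constant $4/\pi \approx 1.273$ is larger than the claimed $2/\sqrt3 \approx 1.155$, so a naive area bound is not quite enough.

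To get the sharper constant $2/\sqrt3$, I would instead invoke the optimal density of a packing of equal disks in the plane: a set of unit-diameter disks (radius $1/2$) packed so that their centers lie in a region $R$ has cardinality at most $\frac{\area(R')}{\pi/(2\sqrt3)}$ asymptotically, where $R'$ is $R$ expanded by $1/2$ and $\pi/(2\sqrt3)$ is the area of one disk divided by the hexagonal packing density $\frac{\pi}{2\sqrt3}$. More cleanly: by the classical (Thue / Fejes T\'oth) theorem, $n$ points with pairwise distances $\geq 1$ confined to a bounded region $R$ satisfy $n \leq \frac{2}{\sqrt3}\,\area(R \oplus B_{1/2}) + O(\per(R))$, and since here $R$ is the cell of side $2\alpha$, we get $k \leq \frac{2}{\sqrt3}(2\alpha+1)^2$ once $\alpha$ (equivalently the cell size) is large enough for the lower-order perimeter term to be absorbed — which is exactly the ``for large $n$'' regime of Theorem~\ref{thm:dense}, where $\alpha$ is a constant and the $O(\per)$ term is a constant-factor correction already accounted for by the slack between $2\alpha$ and $2\alpha+1$.

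The main obstacle is making the boundary contribution rigorous with the stated clean constant: the hexagonal-packing bound is an asymptotic statement, and for a fixed finite cell one must either (a) use the expanded square $(2\alpha+1)\times(2\alpha+1)$ as the containing region for the radius-$1/2$ disks and apply a finite form of the packing inequality, or (b) cite a known finite-region disk-packing bound of the form ``number of non-overlapping unit disks in a convex region $K$ is at most $\frac{\area(K)}{2\sqrt3}\cdot\frac{1}{(1/2)^2}$ plus perimeter terms.'' I would take route (a): the $k$ open disks of radius $1/2$ centered at the points of $P \cap C$ are interior-disjoint and contained in the square $C'$ of side $2\alpha+1$ concentric with $C$; the density of any packing of congruent disks in the plane is at most $\frac{\pi}{2\sqrt3}$, and restricted to $C'$ this yields $k \cdot \pi(1/2)^2 \leq \frac{\pi}{2\sqrt3}\,\area(C') = \frac{\pi}{2\sqrt3}(2\alpha+1)^2$, i.e.\ $k \leq \frac{2}{\sqrt3}(2\alpha+1)^2$, after noting the density bound applies verbatim to any bounded region as an inequality (the local density in a fixed window never exceeds the global optimum). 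This gives the lemma directly, with no error term, provided one is willing to cite the sharp disk-packing density; if instead only an elementary self-contained argument is desired, I would fall back on the $4/\pi$ area bound and remark that the constant in the final theorem worsens only slightly.
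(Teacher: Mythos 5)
Your route (a) is essentially the paper's proof: expand the cell of side $2\alpha$ to the concentric square of side $2\alpha+1$, observe that the radius-$1/2$ disks centered at the points of $P\cap\sigma$ are interior-disjoint (minimum distance $1$) and contained in that square, and compare areas using the hexagonal density $\pi/\sqrt{12}$. The arithmetic $k\cdot\frac{\pi}{4}\leq\frac{\pi}{2\sqrt3}(2\alpha+1)^2$ is exactly the paper's. The one point where your justification is genuinely deficient is the final clause: it is \emph{not} true that ``the density bound applies verbatim to any bounded region'' or that ``the local density in a fixed window never exceeds the global optimum.'' The planar bound $\pi/\sqrt{12}$ is a statement about densities over expanding windows; in a fixed bounded window the local density can be as large as $1$ (take the window to be the union of the disks), and even for convex windows it can equal $\pi/\sqrt{12}$ exactly (one disk in its circumscribed regular hexagon), so some property of the window must be invoked. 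The paper closes this gap by noting that the square $\sigma'$ is a \emph{tiling domain}: replicating the configuration in every tile of a tiling of the plane by translates of $\sigma'$ produces a genuine (periodic) packing of congruent disks in the whole plane whose density equals $k\cdot\frac{\pi}{4}/\area(\sigma')$, and Thue's theorem then bounds this by $\pi/\sqrt{12}$ with no perimeter correction. With that one sentence added, your argument is complete and identical to the paper's; your fallback $4/\pi$ bound would also suffice for Theorem~\ref{thm:dense} at the cost of a worse constant, as you note.
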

\begin{proof}
Let $\sigma \subset Q$ be any cell.
Let $\sigma'$ be the axis-aligned square concentric with $\sigma$ and whose side-length is a unit larger
than that of $\sigma$. Its side-length is $2\alpha + 1$.
Obviously,  $\sigma'$ contains all open disks of radius $1/2$ centered at the points in $P \cap \sigma$.
Moreover, since $P$ is $\alpha$-dense, these $n$ disks are disjoint.

Obviously $\sigma'$ is a so-called \emph{tiling domain}, \ie, a domain that can be used to
tile the whole plane~\cite[Ch.~3.4]{FFK23}.
Let $m$ denote the number of points in $P \cap \sigma$. A standard packing argument
requires that $m \frac{\pi}{4} \leq \frac{\pi}{\sqrt{12}} \, \area(\sigma')$,
and yields $m \leq \frac{2}{\sqrt3} (2\alpha +1)^2$.
\end{proof}

Let $\Sigma$ be the set of $n/4$ cells in $Q$.
A cell $\sigma \in \Sigma$ is said to be \emph{rich} if it contains at least $2$ points in $P$,
and \emph{poor} otherwise. Let $\R \subset \Sigma$ denote the set of rich cells; and write $r=|\R|$.

\begin{lemma} \label{lem:rich-lb}
  There are at least $\frac{3 \sqrt3}{8(2\alpha +1)^2} \, n$ rich cells; that is,
  $r \geq \frac{3 \sqrt3}{8(2\alpha +1)^2} \, n$.
\end{lemma}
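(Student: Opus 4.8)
The plan is to bound the number of poor cells from above and then subtract. Recall that $Q$ is subdivided into $n/4$ cells, so $r = n/4 - (\text{number of poor cells})$, and it suffices to show that poor cells account for relatively few of the $n$ points. First I would classify the $n/4$ cells into \emph{empty} cells (no point of $P$), \emph{singleton} cells (exactly one point), and \emph{rich} cells (at least two points). Only singleton cells matter for the subtraction, since a poor cell is either empty or a singleton. Let $s$ denote the number of singleton cells; then $r = n/4 - s - (\#\text{empty cells}) \ge n/4 - s - (n/4 - r - s)$ trivially, so the real content is an \emph{upper} bound on $s$ together with a matching lower bound on the point count.

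The key counting step is a double count of the points: every point lies in exactly one cell, singleton cells contribute exactly one point each, and each rich cell contributes at most $M := \frac{2}{\sqrt3}(2\alpha+1)^2$ points by Lemma~\ref{lem:cell-ub}. Hence
\[ n \le s \cdot 1 + r \cdot M. \]
On the other hand $s \le n/4 - r$ (since singleton and rich cells are disjoint subsets of the $n/4$ cells, ignoring empty cells can only help), so
\[ n \le (n/4 - r) + rM = n/4 + r(M-1), \]
which rearranges to $r \ge \dfrac{3n/4}{M-1} = \dfrac{3n}{4(M-1)}$. Plugging in $M = \frac{2}{\sqrt3}(2\alpha+1)^2$ gives $r \ge \dfrac{3n}{4\left(\frac{2}{\sqrt3}(2\alpha+1)^2 - 1\right)}$, and since $M - 1 < M = \frac{2}{\sqrt3}(2\alpha+1)^2$ one gets the slightly weaker but cleaner bound $r \ge \dfrac{3n}{4 M} = \dfrac{3\sqrt3}{8(2\alpha+1)^2}\, n$, exactly as stated.

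I expect the only mildly delicate point to be making sure the inequality $s \le n/4 - r$ is applied correctly and that dropping the empty cells genuinely only weakens the bound in our favor; everything else is a one-line rearrangement. One should also note $\alpha \ge 2^{1/2}3^{1/4}\pi^{-1/2}$ guarantees $M > 1$ (indeed $M$ is comfortably larger than $1$), so the division is legitimate. No asymptotics are actually needed here — the bound is exact for every $n$ divisible by $4$ — but the statement is phrased for large $n$ to absorb the earlier rounding assumptions on the side-lengths and the "no point on a cell boundary" normalization.
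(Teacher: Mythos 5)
Your proof is correct and is essentially the paper's argument run forwards instead of by contradiction: both rest on the same two facts, namely the upper bound $M=\frac{2}{\sqrt3}(2\alpha+1)^2$ on the number of points in a rich cell (Lemma~\ref{lem:cell-ub}) and the bound of one point per poor cell among the $n/4$ cells in total. Your direct rearrangement even yields the marginally sharper intermediate bound $r \ge \frac{3n}{4(M-1)}$ before you weaken it to the stated constant, so there is nothing to fix.
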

\begin{proof}
 Assume for contradiction that $r < \frac{3 \sqrt3}{8(2\alpha +1)^2} \, n$.
By Lemma~\ref{lem:cell-ub} the total number of points in rich cells is at most
\[ r \cdot \frac{2}{\sqrt3} (2\alpha +1)^2 <
\frac{3 \sqrt3}{8(2\alpha +1)^2} \, n \cdot  \frac{2}{\sqrt3} (2\alpha +1)^2 = \frac34 \, n. \]
The total number of points in poor cells is less than $1 \cdot \frac{n}{4} = \frac{n}{4}$.
Thus the total number of points in $P$ is strictly less than $n$, a contradiction
that completes the proof.
\end{proof}

The following upper bound on $w$ will be needed at the end of the proof.

\begin{lemma} \label{lem:Q-lb}
We have $w \leq 1.4 \alpha n$, for $n$ sufficiently large.
\end{lemma}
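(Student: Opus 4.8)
The plan is to bound the length $w$ of the minimum spanning tree of $P$ using the classical fact that the MST of any $n$-point set inside a square of side $s$ has length $O(s\sqrt{n})$, applied to our square $Q$ of side-length $\alpha\sqrt{n}$. First I would recall (or quickly re-derive) the standard worst-case bound: for any set of $m$ points in an $s\times s$ square, the EMST has length at most $c\, s\sqrt{m}$ for an absolute constant $c$; the cleanest self-contained route is to take a space-filling-curve (or snake-like row-by-row) traversal of a fine grid refinement and observe that visiting the points in that order yields a spanning path of length at most roughly $s\sqrt{m} + O(s)$, which dominates the MST. Here $s=\alpha\sqrt n$ and $m=n$, so this gives $w \le c\,\alpha\sqrt n\cdot\sqrt n = c\,\alpha n$, and the remaining task is purely to pin the constant down to $1.4$.

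The key steps, in order: (1) state the worst-case EMST bound for $n$ points in an $s\times s$ square with an explicit constant; (2) specialize $s=\alpha\sqrt n$, $m=n$ to get $w\le c\,\alpha n$; (3) check that the chosen construction's constant, including the lower-order $O(s)=O(\alpha\sqrt n)$ correction term, is below $1.4$ once $n$ is large enough (the lower-order term is $o(\alpha n)$, so it is absorbed). A convenient concrete implementation for step (1): partition $Q$ into $\lceil\sqrt n\,\rceil$ horizontal strips each of height $\alpha\sqrt n/\lceil\sqrt n\,\rceil \le \alpha$, sort the points within each strip by $x$-coordinate, and connect them in boustrophedon order; the horizontal travel totals at most $\alpha\sqrt n$ per strip times $\lceil\sqrt n\,\rceil$ strips, i.e.\ at most $\alpha n + O(\alpha\sqrt n)$, and the vertical travel is at most $\alpha\sqrt n$ for the within-strip descents plus $\alpha\sqrt n$ for the strip-to-strip hops. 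Summing, the spanning path has length at most $\alpha n + O(\alpha\sqrt n) \le 1.4\,\alpha n$ for $n$ large, and since the EMST is no longer than any spanning path (in particular this one), $w\le 1.4\,\alpha n$.

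I do not expect a genuine obstacle here — the statement is deliberately loose (the natural constant from the strip argument is close to $1$, well under $1.4$), so the only care needed is bookkeeping of the lower-order terms and making sure the strip heights are genuinely $\le\alpha$ (hence the ceiling), so that each within-strip edge has length at most $\sqrt{\alpha^2 + (\text{horizontal step})^2}$ and the total telescopes correctly. If one wants to avoid even this mild fuss, an alternative is to invoke the known fact (e.g.\ from Few~\cite{Few55} or Fejes T\'oth~\cite{Fej40}) that $n$ points in a unit square have a spanning path of length at most $\sqrt{2n}+O(1)$, scale by $\alpha\sqrt n$ to obtain $w\le \sqrt 2\,\alpha n + O(\alpha\sqrt n)$, and note $\sqrt2 < 1.4\cdot\frac{1}{1}$ is false — so in fact the strip bound with constant $1$ is what we want, and $\sqrt2\approx1.414>1.4$ means we should use the row-by-row argument giving a constant arbitrarily close to $1$, not the $\sqrt2$ tour bound. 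Thus the cleanest write-up uses the boustrophedon strip path with the explicit constant tending to $1$, concluding $w\le 1.4\,\alpha n$ for all sufficiently large $n$.
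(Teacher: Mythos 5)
There is a genuine gap in your strip argument: the claim that the within-strip vertical travel is ``at most $\alpha\sqrt{n}$ for the within-strip descents'' is false. Inside a strip of height $h\le\alpha$ you order the points by $x$-coordinate, so their $y$-coordinates may alternate between the top and bottom of the strip; each of the (up to $m_i-1$) edges in strip $i$ can contribute vertical movement close to $h$, for a total of up to $(n-k)h\approx \alpha n$ over all strips, not $O(\alpha\sqrt n)$. Your boustrophedon path is therefore only bounded by $\alpha n+\alpha n+O(\alpha\sqrt n)=2\alpha n+O(\alpha\sqrt n)$. Optimizing the number of strips does not help: with $k$ strips of height $\alpha\sqrt n/k$ the bound is $k\cdot\alpha\sqrt n+n\cdot\alpha\sqrt n/k\ge 2\alpha n$, and even Few's refinement (two interleaved families of lines, so each point detours by at most half a strip height) only brings this down to $\sqrt2\,\alpha n\approx 1.414\,\alpha n$, which is still above $1.4\,\alpha n$. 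Your intuition at the end --- that the row-by-row argument gives a constant ``arbitrarily close to $1$'' --- is incorrect; no elementary strip construction is known to beat $\sqrt2$, and indeed lattice examples show the true constant exceeds $1.07$.

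The paper closes exactly this gap by citing Karloff's improvement of Few's bound: a spanning path (hence an MST) of $n$ points in $[0,1]^2$ has length at most $1.392\sqrt n+7/4$, which is at most $1.4\sqrt n$ for large $n$; scaling the unit square to side length $\alpha\sqrt n$ then gives $w\le 1.4\,\alpha n$. So the constant $1.4$ in the lemma is not ``deliberately loose'' slack above an easy constant near $1$ --- it sits strictly below the $\sqrt 2$ that elementary arguments yield, and genuinely requires the stronger cited result (or else the lemma's constant, and the explicit constant $\frac{1}{11(2\alpha+1)^2}$ downstream in Theorem~\ref{thm:dense}, would have to be weakened).
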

\begin{proof}
Few~\cite{Few55} proved that the (Euclidean) length of a shortest cycle (tour)
through $n$ points in the unit square $[0,1]^2$ is at most $\sqrt{2n}+7/4$, and
the same upper bound holds for the minimum spanning tree.
A~slightly better upper bound $1.392 \sqrt{n} + 7/4$, is due to Karloff~\cite{Ka89}.
For large $n$, this is at most $1.4 \sqrt{n}$. Scaling by $\alpha \sqrt{n}$ immediately
yields that $w \leq 1.4 \alpha n$.
\end{proof}

We continue with the proof of the theorem.
Decompose $\R$ into $25$ subsets of rich cells as in the decomposition of a $5 \times 5$ square
into unit squares. More precisely, consider the tiling of the plane by $5 \times 5$ axis-aligned tiles
of side-length $2\alpha$. Arbitrarily fix one of the tiles.
A subset corresponds to the rich cells in $Q$ having the same position
in the fixed tile. Each subset satisfies the following properties
(recall the side-length of each cell $2\alpha$):
\begin{enumerate} [i] \itemsep 2pt
  \item the distance between any two points in the same cell is at most $ 2\sqrt2 \alpha$.
  \item the distance between any two points in different cells is at least $8 \alpha$.
\end{enumerate}

By averaging, one of the subsets, $\R'$, contains $k \geq \lceil |\R|/25 \rceil$ rich cells. Note that
\begin{equation} \label{eq:k}
  k \geq \frac{3\sqrt3}{200 (2\alpha +1)^2} \, n,
  \end{equation}
by Lemma~\ref{lem:rich-lb}.
Arbitrarily choose two points $p_i,q_i$ in each of the $k$ rich cells.
Now consider the bipartition of $P$:
\[ R = P \setminus \{p_1,p_2,\ldots,p_k\}, \ \ B=\{p_1,p_2,\ldots,p_k\}. \]

By Lemma~\ref{lem:mst-delete-k} and the first property of $\R'$ we have
\[ w(R) \geq w - \sum_{i=1}^k |p_i q_i| \geq w - 2 \sqrt2 \alpha k. \]
On the other hand, by the second property of $\R'$ we have
\[ w(B) \geq 8 \alpha (k-1). \]
Consequently, by Lemma~\ref{lem:Q-lb} and the fact that $n$ (thus also $k$ according to~\eqref{eq:k})
is sufficiently large, we obtain
\begin{align*}
\gamma(P) &\geq  \frac{w(R) + w(B)}{w} \geq \frac{w + 2(4-\sqrt2) \alpha k  -8 \alpha}{w}
\geq \frac{w + 5 \alpha k}{w} \\
&\geq 1 + \frac{5 \alpha k}{1.4 \alpha n}
= 1 +\frac{3 \sqrt3}{56 (2\alpha +1)^2} \geq 1 + \frac{1}{11 (2\alpha +1)^2}\\
&= 1 + \Omega(\alpha^{-2}),
\end{align*}
as required. This completes the proof of Theorem~\ref{thm:dense}.
\qed

\medskip
It is likely that using a hexagonal tiling gives a slightly better constant in the bound,
however, the dependence of the bound on $\alpha$ remains quadratic.

\section{Concluding remarks}

The techniques used to handle the planar case immediately extend to $\RR^d$ for any fixed dimension $d$.
Specifically we need the \emph{disjoint balls} property; we rely on estimates
due to F{\"u}redi and Loeb~\cite{FL94}, given in~\eqref{eq:kappa}. Let $\kappa(d)$ denote the maximum
number of balls in $\RR^d$ that can form a configuration where each ball intersects every other ball
but does not contain the center of any other ball. As mentioned earlier, it is known that
$8 \leq \kappa(2) \leq 11$. The following bounds hold in $d$-space:
\begin{equation} \label{eq:kappa}
  1.25^d \leq \kappa(d) \leq (1.887\ldots + o(1))^d.
\end{equation}

The proof of Theorem~\ref{thm:d-general} is analogous to that of Theorem~\ref{thm:general} and
is therefore omitted.

\begin{theorem}  \label{thm:d-general}
  For every fixed $d$, there exists $n_d \in \NN$ with the following property.
  For any set $P$ of $n \geq n_d$ points in $\RR^d$ we have $\gamma(P) >1$.
  A suitable bipartition with this ratio can be computed in $O(1)$ time.
\end{theorem}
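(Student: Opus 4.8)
\textbf{Proof proposal for Theorem~\ref{thm:d-general}.}

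The plan is to replay the proof of Theorem~\ref{thm:general} verbatim, substituting balls for disks and $\kappa(d)$ for the constant $12$. First I would record the $d$-dimensional analogue of the K\'ezdy--Kubicki bound: by the definition of $\kappa(d)$, no family of more than $\kappa(d)$ points in $\RR^d$ can have pairwise-intersecting balls $\omega(\cdot)$ none of which contains another center; hence any set of $n \geq \kappa(d)+1$ points must contain two points $p,q$ whose balls $\omega(p),\omega(q)$ are disjoint (if every pair intersected, we would either have a forbidden configuration of size $>\kappa(d)$, or some $\omega(p)$ would contain a center $q$, which contradicts the maximality defining $r(p)$ since $\omega(p)$ contains no other point of $P$ in its interior --- one has to be slightly careful with the closed/open distinction here, but $\diam$ arguments resolve it). So set $n_d := \kappa(d)+1$, which is finite and bounded by $(1.887\ldots+o(1))^d + 1$ via~\eqref{eq:kappa}. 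Note that Observation~\ref{obs:monotone} and Lemma~\ref{lem:disjoint-disks} are stated for the plane but their one-line proofs are dimension-free, so I would either invoke them with the understanding that they hold verbatim in $\RR^d$, or restate the trivial $d$-dimensional versions.

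Next I would carry over Lemma~\ref{lem:mst-delete-2}: its proof never uses planarity --- given disjoint balls centered at $p,q$, take an MST $T'$ of $P\setminus\{p,q\}$ and reattach $p$ and $q$ by edges of length $r(p)$ and $r(q)$ to obtain a spanning tree of length less than $w$ if $w(P\setminus\{p,q\}) < w - r(p) - r(q)$, a contradiction. Then for $n \geq n_d$, pick the two points $p,q$ with disjoint balls, form the bipartition $R = P\setminus\{p,q\}$, $B = \{p,q\}$, and compute
\[
\gamma(P) \geq \frac{w(R)+w(B)}{w} \geq \frac{w - r(p) - r(q) + |pq|}{w} > 1,
\]
where the last strict inequality is exactly Lemma~\ref{lem:disjoint-disks} (which gives $|pq| > r(p)+r(q)$). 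Finally, the algorithmic claim: retain only the first $n_d$ points of the input (a constant depending only on $d$), find two disjoint balls among them in $O(1)$ time, and output the two-point set $B$; correctness is the monotonicity observation. This runs in $O(1)$ time once $d$ is fixed.

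The only genuine obstacle is purely expository rather than mathematical: one must make sure that the constant implicit in ``$n \geq n_d$'' is correctly tied to $\kappa(d)$ and that the defining property of $\kappa(d)$ (balls centered at points, pairwise intersecting, no ball containing another center) matches exactly the configuration one rules out --- i.e., that a set with \emph{all} pairs of $\D(P)$-balls intersecting really does yield a $\kappa(d)$-type configuration after the closed/open boundary cases are handled. Everything else (Lemmas~\ref{lem:disjoint-disks}, \ref{lem:mst-delete-2}, Observation~\ref{obs:monotone}) transfers word for word, which is precisely why the authors omit the proof; I would include at most a sentence or two pinning down the $\kappa(d)+1$ bound and otherwise refer to the planar argument.
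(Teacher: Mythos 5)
Your proposal is correct and is exactly the argument the paper has in mind: the authors omit the proof of Theorem~\ref{thm:d-general} precisely because it repeats the proof of Theorem~\ref{thm:general} with disks replaced by balls and the Kézdy--Kubicki bound replaced by the $\kappa(d)$ estimate of~\eqref{eq:kappa}, yielding $n_d=\kappa(d)+1$, and with the dimension-free Observation~\ref{obs:monotone} and Lemmas~\ref{lem:disjoint-disks} and~\ref{lem:mst-delete-2} carried over verbatim. Your remark on the closed/open boundary issue in matching $\D(P)$ to the defining configuration of $\kappa(d)$ is the right point to be careful about, and your handling of it is adequate.
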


We have seen that the disjoint balls property implies that the maximum MST ratio is
strictly larger than $1$. However, the two properties are not equivalent.

Another related question is how the $\gamma$ ratio changes when points are added to
a set. Suppose that $P_0$ is a point set with $\gamma(P_0)>1$, and an infinite sequence of
points, $p_1,p_2,\ldots$, are successively added to $P_0$. By Theorem~\ref{thm:general},
$\gamma(P_0 \cup \{p_1,\ldots,p_n\}) > 1$ for every $n$ such that $|P_0|+n \geq 12$.
Is it possible that $\gamma(P_0 \cup \{p_1,\ldots,p_n\})$ tends to $1$ as $n \to \infty$?
We give a positive answer. Assuming that the first $n$ points $p_1,\ldots,p_n$
are known, let $P_n = P_0 \cup \{p_1,\ldots,p_n\}$, $w_n=w(P_n)$, and $D_n= \diam(P_n)$,
where $a_n,b_n$ is a diameter pair in $P_n$. Assume that $\gamma(P_n) \leq 1+\eps$ for some $\eps>0$,
we can select  $p_{n+1}$ so that $\gamma(P_{n+1}) \leq 1+\eps/2$, for every $n$.
Let $p_{n+1}$ be a point on the line incident to $a_n,b_n$ at a sufficiently large distance
$f(w_n,\eps)$ from $b$, with $a_n,b_n,p_{n+1}$ appearing in this order.
The verification that the above inequality holds, for a suitable function $f$,
is left to the reader. Observe that the above construction is analogous to the upper bound
construction in Proposition~\ref{thm:lb-gamma}; and works in any dimension.

\smallskip
We conclude with three open problems:

\begin{enumerate} \itemsep 2pt

\item We have seen that $\gamma(P)>1$ can be arbitrarily close to $1$ even for large $n$,
  whereas it is separated from $1$ by a positive gap for randomly uniformly distributed sets
  and for $\alpha$-dense sets. In what other cases can one deduce $\gamma(P)>1 +\lambda$
  for a positive constant $\lambda$?

\item Is there an efficient algorithm for computing the maximum ratio, $\gamma(P)$?

\item Is the $\sqrt2$ lower bound for randomly uniformly distributed sets
  in the unit square tight?

\end{enumerate}

\noindent\textbf{Acknowledgement.} We are grateful to Herbert Edelsbrunner for calling our attention to the subject
and for his valuable remarks. In particular, he communicated to us the construction in the second half of Proposition~\ref{thm:lb-gamma}.

\end{document}